\newcommand{\R}{\mathbb{R}} 
\newcommand{\p}{\partial}
\renewcommand{\d}{\mathrm{d}}
\newcommand{\scri}{{\mathscr I}}
\newcommand{\scrh}{{\mathscr H}}
\DeclareMathOperator{\Int}{Int}
\DeclareMathOperator{\Ext}{Ext}
\newcommand{\M}{\mathcal{M}}
\newtheorem{theorem}{Theorem}[section]
\newtheorem{proposition}{Proposition}[section]
\newtheorem{corollary}{Corollary}[section]
\newtheorem{lemma}{Lemma}[section]
\newtheorem{remark}{Remark}[section]
\begin{document}

\numberwithin{equation}{section}
\begin{center}
\bf{\Huge Peeling at extreme black hole horizons}

\vspace{0.1in}

{Jack BORTHWICK\footnote{LmB, UMR CNRS 6623, Université de Bourgogne Franche-Comté, 16 route de Gray, 25030 Besançon cedex, France, jack.borthwick@univ-fcomte.fr}, Eric GOURGOULHON\footnote{LUTH, UMR CNRS 8102, Observatoire de Paris, Université PSL, Université Paris Cité, 5 place Jules Janssen, 92190 Meudon, France, eric.gourgoulhon@obspm.fr} \& Jean-Philippe NICOLAS\footnote{LMBA, UMR CNRS 6205, Université de Brest, 6 avenue Victor Le Gorgeu, 29200 Brest, France, jean-philippe.nicolas@univ-brest.fr} }
\end{center}

{\bf Abstract.}
The starting point of this work was an intriguing similarity between the behaviour of fields near a degenerate horizon and near the infinity of an asymptotically flat spacetime, as revealed by the scattering theory for Dirac fields in the \enquote{exterior} region of the extreme Kerr - de Sitter black hole, developed by one of the authors (JB). However, in that situation, the comparison was somewhat clouded by some of the analytical techniques used in intermediate steps of the proof. The aim of the present work is to clarify the comparison further by studying instead the peeling behaviour of solutions to the wave equation at an extremal horizon. We focus first on the extreme Reissner-Nordström black hole, for which the Couch-Torrence inversion (a global conformal isometry that exchanges the horizon and infinity) makes the analogy explicit. Then, we explore more general spherically symmetric situations using the Couch-Torrence inversion outside of its natural context.
\vspace{0.1in}

{\bf Keywords.} Peeling, wave equation, extremal horizons, extreme Reissner-Nordström metric, null infinity, conformal compactification.

\vspace{0.1in}

{\bf Mathematics subject classification.} 35B40, 35L05, 35Q75, 83C57.

\tableofcontents

\section{Introduction}

There is a similarity of behaviour for a field propagating near a degenerate horizon and near infinity. One of us (Jack Borthwick \cite{JB2020}) has recently studied the scattering of Dirac fields by an extremal Kerr-de Sitter black hole; the scattering theory is obtained via spectral methods using Mourre theory, inferring from it propagation estimates that serve as weak versions of the Huygens principle and allow to compare the physical dynamics with a succession of simplified dynamics. Near the horizon of the black hole, that is the degenerate one, the last and simplest comparison dynamics is the radial part of the Dirac Hamiltonian on Minkowski spacetime. This indicates an analogy, but the presence of the other comparison dynamics, one of which involving a Dollard-type phase modification, makes the link rather obscure. The main purpose of this paper is to clarify this analogy, from the point of view of the peeling behaviour instead of a scattering theory.

The peeling is a type of asymptotic behaviour at infinity along outgoing null geodesics, satisfied by zero rest-mass fields on Minkowski spacetime, that was discovered by Sachs in the early 1960's \cite{Sa61}. The original description by Sachs states that an outgoing zero rest-mass field of spin $s$, along a null geodesic going out to infinity, can be expanded in powers of $1/r$ and the part of the field falling-off like $r^{-k}$, $1\leq k \leq 2s$, has $2s-k$ of its principal null directions aligned along the null geodesic. In 1965, Roger Penrose \cite{Pe65} proved that this is equivalent to a much simpler property, the continuity at null infinity of the rescaled field. In 2009, Lionel Mason and one of the authors (Lionel Mason and Jean-Philippe Nicolas \cite{MaNi2009}) studied the peeling for scalar fields on the Schwarzschild spacetime. They redefined the notion of peeling using a characterisation of regularity at any order at null infinity in terms of Sobolev-type spaces, obtained as energy fluxes for a special observer: the ``Morawetz vector field''. This is in the same spirit as Penrose's version of the original definition, but is more amenable to analysis since such function spaces are naturally preserved under the evolution for hyperbolic equations, unlike $\mathcal{C}^k$ spaces. They provided a complete description of the classes of initial data on a Cauchy hypersurface that give rise to a peeling at any given order at future null infinity. Comparing the construction to the analogous one on Minkowski spacetime, it turned out that these classes are the same in both spacetimes, in terms of regularity and decay at infinity. This means that in spite of the different asymptotics of the two metrics, the conditions for peeling are not more stringent on the Schwarzschild metric than on Minkowski spacetime; they are identical. The original Morawetz vector field is one of the conformal Killing vectors of Minkowski spacetime. It was discovered by Kathleen Morawetz in 1962 \cite{Mo1962} and used to establish decay properties for solutions to the wave equation on flat spacetime. The vector field used in \cite{MaNi2009} is a natural adaptation of this vector field to the Schwarzschild metric. The first occurrence of such a modified (and slightly different) Morawetz vector field was in a paper by Inglese and Nicolò in 2000 \cite{InglNi}.

A typical example of degenerate horizon can be found in the extreme Reissner-Nordström spacetime. In this case, there exists a remarkable conformal isometry of the exterior of the black hole that exchanges the horizon and infinity. This was initially discovered by Couch and Torrence in 1983 \cite{CoTo1983} and is referred to as the Couch-Torrence inversion. A nice description of it with useful additional properties can be found in Aretakis's book \cite{Are2018Book} as well as in Bizon and Friedrich \cite{BiFri2013} and in Lübbe and Valiente-Kroon \cite{LVK2014}. This inversion makes the above-mentioned analogy of behaviour very precise on the extreme Reissner-Nordström geometry. First, extending the results of \cite{MaNi2009} to infinity of the extreme Reissner-Nordström spacetime, we then translate them at the extreme horizon via the Couch-Torrence inversion.  The peeling at the horizon is analogous to the peeling at the conformally rescaled infinity, modulo a finite conformal rescaling of the horizon. This entails that massive fields can also be shown to exhibit a complete peeling at the horizon.
Such precise global structures as the Couch-Torrence inversion do not seem to exist for other extremal black hole spacetimes in four dimensions. Nevertheless, one may ask if they can be localised near extremal horizons in order to provide an alternative description of their neighbourhoods in a useful way. We provide here a first example of this type of construction and use it to study the peeling near a class of spherically symmetric degenerate horizons.

Our paper is organised as follows. Section \ref{GB} presents the conformal d'Alembertian, the extreme Reissner-Nordström spacetime and its conformal compactification. In Section \ref{CTSection}, we describe the Couch-Torrence inversion on the extreme Reissner-Norström metric and observe something that seems to have been overlooked until now in the literature: there exist many choices of conformal rescalings of the extreme Reissner-Norström metric that make the Couch-Torrence inversion an isometry (and not just a conformal isometry), among which the simplest and most useful rescaling associated with the conformal factor $\Omega = 1/r$. Other examples are given as well as the condition on the conformal factor for this to be true. Section \ref{PeelingRNE} is devoted to the peeling at infinity on extreme Reissner-Nordström spacetime and its translation at the horizon, including for massive fields, using the Couch-Torrence inversion. Finally, in Section \ref{LooseEndCT}, we consider a class of spherically symmetric degenerate horizons and apply the Couch-Torrence inversion to describe them as conformally rescaled infinities. We then study the peeling of scalar fields at these infinities and translate back to the horizons.

\medskip
\noindent{\textbf{Notations}:} $ A \lesssim B$ means that there is a constant $C>0$ such that $A\leq C B$, $A \simeq B$ means that there are constants $c,C>0$ such that: $cB \leq A \leq CB$. Given a smooth differentiable manifold $\mathcal{M}$, we denote by $\mathcal{C}^\infty_0 (\mathcal{M})$ the space of smooth and compactly supported functions on $\mathcal{M}$. Throughout the paper, we use the abstract index formalism of Penrose and Rindler \cite{PeRi1984}.

\medskip
\noindent{\textbf{Notebooks}:} Some computations performed in this article are detailed in
the following publicly available SageMath \cite{Sage} notebooks:
\begin{itemize}
\item Sections~\ref{GB} to \ref{PeelingRNE}:\\
{\small \url{https://cocalc.com/share/public_paths/f05d583cb13735d5f9e8bd292d34b24633738129}}
\item Section~\ref{LooseEndCT}:\\
{\small \url{https://cocalc.com/share/public_paths/795aec91a03a81ed3ce0154e6fe9bafb7b69845c}}
\end{itemize}

\section{Geometrical background} \label{GB}

\subsection{The conformal d'Alembertian and wave equation}\label{IntroConformaldAlembertian}
Let $(\M,\mathbf{c})$ be a conformal manifold of dimension $n$ and suppose that $\mathbf{c}=[g]$ is the conformal class of a Lorentzian metric $g$. Call $\mathcal{E}[\omega]$ the module of conformal densities (see \S 2.4 in \cite{CuGo2015}) on $\M$ of weight $\omega \in \mathbb{R}$.  Conformal densities of weight  $-n$ can be identified, given a choice of $g \in \mathbf{c}$, with usual $1$-densities via the map:
\[ \sigma \mapsto \frac{\sigma}{\textrm{Vol}_g}, \]
where $\textrm{Vol}_g$ is the canonical volume density of $g$.
For every $g \in \mathbf{c}$, there is a canonical conformal density\footnote{roughly the $\frac{1}{-n}$th power of the image of $\textrm{Vol}_g$ under the above map} of weight $1$ that we denote by $\sigma_g$ and that is parallel for the Levi-Civita connection of $g$. Observe furthermore that $\mathbf{g}=g\sigma_g^2$ is conformally invariant; it is referred to as the conformal metric (it has conformal weight $2$).
It is well known that the operator, expressed in terms of an arbitrary metric $g\in\mathbf{c}$:
\[ \Box_\mathbf{c} = \mathbf{g}^{ab}\nabla_a\nabla_b + \frac{1}{6}\mathbf{g}^{ab}R_{ab},\]
is conformally invariant acting from $\mathcal{E}[-1]$ into $\mathcal{E}[-3]$.
In the above equation, $\nabla$ and $R_{ab}$ are the Levi-Civita connection and Ricci tensor $R_{ab}=R_{ac\phantom{c}b}^{\phantom{ac}c}$ \emph{of the chosen metric $g$} respectively and conformal invariance is to be understood to mean that the result of the above computation \emph{does not depend} on the choice of $g\in \mathbf{c}$ used to calculate it. We will refer to it as the \emph{conformal d'Alembertian} and the equation:
\begin{equation} \label{CWEDensity}\Box_{\mathbf{c}} \underline{\phi} =0, \qquad \underline{\phi} \in \mathcal{E}[-1],\end{equation}
as the \emph{conformal wave equation}.

In practice, this shall be exploited as follows. Let $g\in \mathbf{c}$ be given and suppose that $\phi$ is a scalar field that satisfies: \begin{equation} \label{CWaveScalar}\Box_g \phi +\frac{1}{6}g^{ab}R_{ab}\phi =0,\end{equation} with $\square_g = g^{ab} \nabla_a \nabla_b$, then the conformal density $\underline{\phi}=\phi \sigma_g^{-1}$ satisfies $\Box_\mathbf{c} \underline{\phi}=0$. Introducing, $\hat{g}=\Omega^2g$, we see that $\underline{\phi}=\Omega^{-1} \phi  \sigma_{\hat{g}}^{-1}$ hence, $\hat{\phi}=\Omega^{-1}\phi$ satisfies: \[\Box_{\hat{g}}\hat{\phi}+\frac{1}{6}\hat{g}^{ab}\hat{R}_{ab}\hat{\phi}=0.\]

\noindent For us $g$ will be the physical metric, and $\hat{g}$ the compactified metric.

\subsection{Extremal Reissner-Nordström metric and its conformal compactification}
\label{s:ERN_conformal_compact}
Let us recall some basic facts about the extremal Reissner-Nordström metric. The standard expression for the metric in Schwarzschild-like coordinates $(t,r,\omega)\in \R\times \R_+^*\times S^2$ is
\[ g = F(r) \d t^2 - \frac{1}{F(r)} \d r^2 - r^2 \d \omega^2 \, ,~ F(r) = \left( \frac{r-M}{r} \right)^2 \, ,\]
$\d \omega^2$ being the standard round metric on the unit $2$-sphere. Of course, this is only defined separately on the open sets $\Int=\R_t\times (0,M)_r\times S^2$ and $\Ext=\R_t\times(M,+\infty)_r\times S^2$.

Introduce now Regge-Wheeler's tortoise type coordinate $r_*$, defined by:
\begin{align}
\frac{\d r_*}{\d r} = \frac{1}{F(r)}&=1+\frac{M^2}{(r-M)^2}+\frac{2M}{r-M} \,\label{EDOrstar} \\
r_* = 0 &\Leftrightarrow r=2M \, , \label{Constantrstar}
\end{align}
which is easily integrated to yield:
\begin{equation} \label{rstar}
r_* = r -M + 2M \log \left( \frac{r-M}{M} \right) - \frac{M^2}{r-M} \, .
\end{equation}
We have chosen $r_*$ so that it vanishes on the photon sphere, located at $r=2M$.

The metric $g$ can be extended analytically across the coordinate singularity $\{r=M\}$ in two different ways:
\begin{enumerate}
\item In outgoing Eddington-Finkelstein coordinates, defined by:
\begin{equation}\label{ef-outgoing} u=t-r_*, r, \omega, \end{equation}
leading to $E_+= \R_u\times \R_+^* \times S^2$, endowed with the metric $g_1$:
\[g_1= F(r)\d u ^2 +2\d u \d r -r^2 \d \omega^2, \]
see Figure~\ref{outgoing}. The past event horizon $\scrh^-$ is
the hypersurface $r=M$ of $E_+$.
\item In ingoing Eddington-Finkelstein coordinates, defined by:
\begin{equation}\label{ef-ingoing} v= t+r_*,r,\omega, \end{equation}
leading to $E_-= \R_v \times \R_+^* \times S^2$, equipped with:
\[g_2 = F(r)\d v^2 -2\d v\d r -r^2 \d \omega^2, \]
see Figure~\ref{ingoing}. The future event horizon $\scrh^+$ is
the hypersurface $r=M$ of $E_-$.
\end{enumerate}
\begin{figure}[h!]
\centering
\begin{subfigure}{0.3\paperwidth}
\begin{tikzpicture}[scale=1.25,very thick]
\draw [](-1,-1) --node[below]{$\scrh^+$} (1,-1);
\draw [, decorate,decoration={zigzag,amplitude=2,segment length=3mm}](-1,-1) --node[above,rotate=45]{$r=0$}  (1,1);
\draw [](1,1)--node[]{$\scrh^{-}$}(1,-1);
\draw [](1,1)--node[above]{$\scrh^+$}(3,1);
\draw [](1,-1)--node[below]{$\scri^-$}(3,-1);
\draw [](3,1)--node[right]{$\scri^+$}(3,-1);
\draw[,postaction={decorate,decoration={markings,mark=at position .31 with {\arrow{latex}},mark=at position .9 with {\arrow{latex}}}}](-.5,-.5)--(3,-.5);
\draw[,postaction={decorate,decoration={markings, mark=at position 0.75 with {\arrow{latex}}}}](.2,-1)--(.2 ,.2);
\draw[,postaction={decorate,decoration={markings,mark=at position .75 with {\arrow{latex}}}}](1.5,-1)--(1.5,1);
\node at (2,.5){$\Ext$};
\node at (.5,.1){$\Int$};
\end{tikzpicture}
\caption{\label{outgoing}$E_+$}\end{subfigure}
\begin{subfigure}{0.3\paperwidth}
\begin{tikzpicture}[scale=1.25,very thick]
\draw [](-1,1) --node[above]{$\scrh^-$}(1,1);
\draw [, decorate,decoration={zigzag,amplitude=2,segment length=3mm}](-1,1) --node[below,rotate=-45]{$r=0$}  (1,-1);
\draw [](1,1)--node[below]{$\scrh^+$}(1,-1);
\draw [](1,1)--node[above]{$\scri^+$}(3,1);
\draw [](1,-1)--node[below]{$\scrh^-$}(3,-1);
\draw [](3,1)--node[right]{$\scri^-$}(3,-1);
\draw[,postaction={decorate,decoration={markings,mark=at position 0.3 with {\arrow{latex}},mark=at position .9 with {\arrow{latex}}}}](3,.23)--(-.23,.23);
\draw[,postaction={decorate,decoration={markings, mark=at position 0.85 with {\arrow{latex}}}}](2.5,-1)--(2.5,1);
\draw[,postaction={decorate,decoration={markings,mark=at position .8 with {\arrow{latex}}}}](0.5,-.5)--(0.5,1);
\node at (1.75,.6){$\Ext$};
\node at (0,.6){$\Int$};
\end{tikzpicture}
\caption{\label{ingoing}$E_-$}\end{subfigure}
\caption{Schematic representation of $E_+$ and $E_-$, the horizontal and vertical lines in each block represent future-oriented principal null geodesics. The terminology future/past horizon is to be understood with respect to the exterior block.}
\end{figure}
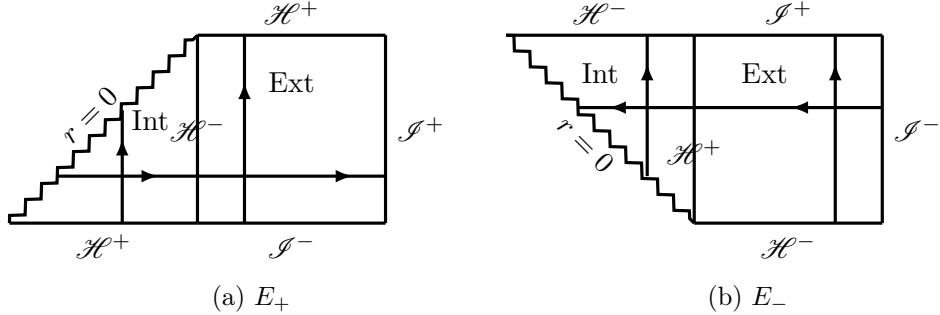
The time-orientation of both spaces is determined by $\frac{\partial}{\partial t}$ for large $r$.
The construction is based on the two distinguished null directions -- the principal null directions -- defined in Schwarzschild coordinates by
\begin{equation}
n_+ = \frac{1}{F(r)}\frac{\partial }{\partial t}+ \frac{\partial}{\partial r}, \quad n_-=\frac{1}{F(r)}\frac{\partial}{\partial t} -\frac{\partial}{\partial r}.
\end{equation}
In both $E_+$ and $E_-$, $\Int$ and $\Ext$ are easily identified as open sets and $\Int \cup \Ext$ is dense. In $E_+$, the coordinate lines of $r$ are  the integral curves of $n_+$ and are (complete) geodesics; these are referred to as the outgoing principal null geodesics. In $E_+$ ingoing principal null geodesics are defined as geodesic reparametrisations of $F(r)n_- $; they are however (future) incomplete. The situation is reversed in $E_-$: the coordinate lines of $r$ (oriented for decreasing $r$) are integral curves of $n_-$ and are complete geodesics referred to as the ingoing principal null geodesics. Outgoing principal null geodesics are the (past)-incomplete geodesic reparametrisations of $F(r)n_+$.
Gluing $E_+$ and $E_-$ together according to the tiling represented in Figure~\ref{tiling} leads to a maximal analytical extension of the extreme Reissner-Nordström metric. In any of the $\Ext$ blocks there is a past and future horizon, however they are not joined by a crossing sphere (since the geodesics on the horizons are complete); there is instead an \enquote{internal infinity}, which we shall denote by $i^1$.
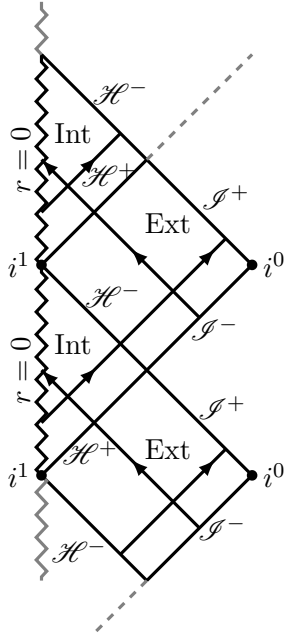
\begin{figure}
\centering
\begin{tikzpicture}[scale=1.4, very thick]
\draw[decorate,decoration={zigzag,amplitude=2,segment length= 3mm}](0,-1) --node[above,rotate=90]{$r=0$} (0,1);
\draw (0,-1) --node[xshift=-.5mm,yshift=-4mm]{$\scrh^+$} (1,0);
\draw (1,0) --node[xshift=2.5mm,yshift=3mm]{$\scrh^-$} (0,1);
\draw[postaction={decorate,decoration={markings,mark=at position .7 with \arrow{latex}}}] (0,-.5)--(0.75,.25);
\draw[postaction={decorate,decoration={markings,mark=at position 1 with \arrow{latex}}}](0.5,-.5)--(0,0);
\node at (.3,0.25){$\Int$};
\draw (0,-1) --node[xshift=-2mm,yshift=-3mm]{$\scrh^-$} (1,-2);
\draw (1,0) --node[xshift=3mm,yshift=2mm]{$\scri^+$} (2,-1);
\filldraw[black] (2,-1) circle (1pt) node[anchor=west]{$i^0$};
\draw (2,-1) --node[xshift=3mm]{$\scri^-$} (1,-2);
\draw[postaction={decorate,decoration={markings,mark=at position .65 with \arrow{latex}}}](1.5,-1.5)--(0.5,-.5);
\draw[postaction={decorate,decoration={markings,mark=at position .9 with \arrow{latex}}}] (.75,-1.75)--(1.75,-.75);
\node at (1.2,-.75){$\Ext$};
\draw[gray,decorate,decoration={zigzag,amplitude=2,segment length=3mm}] (0,-1) -- (0,-2);
\draw[gray,dashed] (1,-2) -- (.5,-2.5);
\filldraw[black] (0,-1) circle (1pt) node[anchor=east]{$i^1$};
\draw (0,1) --node[xshift=2mm,yshift=5mm]{$\scrh^+$}(1,2);
\draw (1,0)--node[xshift=2mm, yshift=-1.5mm]{$\scri^-$} (2,1);
\filldraw[black] (2,1) circle (1pt) node[anchor=west]{$i^0$};
\draw (2,1)--node[xshift=3.5mm,yshift=2mm]{$\scri^+$}(1,2);
\draw[postaction={decorate,decoration={markings,mark=at position .9 with \arrow{latex}}}] (0.75,.25)--(1.75,1.25);
\draw[postaction={decorate,decoration={markings,mark=at position .65 with \arrow{latex}}}](1.5,.5)--(.5,1.5);
\filldraw[black] (0,1) circle (1pt) node[anchor=east]{$i^1$} ;
\node at (1.2,1.4){$\Ext$};

\draw[decorate,decoration={zigzag,amplitude=2,segment length=3mm}] (0,1) --node[above,rotate=90]{$r=0$} (0,3);
\draw[postaction={decorate,decoration={markings,mark=at position 1 with \arrow{latex}}}](.5,1.5)--(0,2);
\draw[postaction={decorate,decoration={markings,mark=at position .7 with \arrow{latex}}}] (0,1.5)--(.75,2.25);
\draw (1,2) --node[xshift=3.5mm,yshift=2mm]{$\scrh^ -$}(0,3);
\draw[gray,decorate,decoration={zigzag,amplitude=2,segment length=3mm}] (0,3) -- (0,3.5);
\draw[gray,dashed] (1,2) -- (2,3);
\node at (.3,2.25){$\Int$};
\end{tikzpicture}
\caption{\label{tiling}Tiling constructed from $E_+$ and $E_-$ that completes the ingoing and outgoing principal null geodesics.
} 
\end{figure}

A set of variables that will be useful to us for studying peeling properties are the outgoing Eddington-Finkelstein coordinates with an inversion in $r$
\begin{equation}
R = \frac{1}{r} \, ,~ u = t-r_* \, ,~ \omega \, .
\end{equation}
These are also adapted to the $1/r$-compactification of the exterior region equipped with the metric $\hat{g}$ given by
\begin{equation} \label{CompExtRNMetEF}
\hat{g} = R^2 g = R^2 (1-MR)^2 \d u^2 -2 \d u \d R - \d \omega^2 \, .
\end{equation}
The future null infinity $\scri^+$ is then the hypersurface $\{R=0\}$.
Note that it is a degenerate Killing horizon with respect to the Killing
vector $\partial/\partial u$ of $\hat{g}$. Similarly, the ingoing Eddington-Finkelstein coordinates with an inversion in $r$:
\begin{equation}
R=\frac1{r} \, ,~v=t+r_* \, ,~ \omega \, ,
\end{equation}
used to express $\hat{g}$, allow to construct past null infinity $\scri^-$ as the hypersurface $\{R=0\}$. We use the standard notation $i^0$ for spacelike infinity.

\section{The Couch-Torrence inversion} \label{CTSection}

The original construction by Couch and Torrence \cite{CoTo1983} in 1983, was a spatial inversion for a modified radial coordinate on the extreme Reissner-Nordström spacetime that turned out to be a global conformal isometry of the exterior region $\Ext$.
Introducing a radial coordinate centered on the horizon in $\Ext$:
\[ y = \frac{r-M}{M}\, \]
the metric $g$ can be expressed as:
\[ g = \frac{y^2}{(1+y)^2} \d t^2 - M^2 \frac{(1+y)^2}{y^2} \d y^2 - M^2 \left(1+y\right)^2 \d \omega^2  .\]
In these coordinates the \emph{Couch-Torrence inversion} is the map $\Phi: (t,y,\omega) \mapsto (t,\frac{1}{y},\omega)$. The pullback of the metric $g$ by $\Phi$ is given by:
\[\begin{aligned} \Phi^*g &= \frac{1}{y^2(1+\frac{1}{y})^2}\d t^2 - M^2y^2\left(\frac{1}{y}+1\right)^2\frac{\d y^2}{y^4}-M^2\left(\frac{1}{y}+1\right)^2\d \omega^2 =\frac{1}{y^2} g , \end{aligned}\]
which shows that $\Phi$ is a conformal isometry of the exterior region.
\begin{remark}
In their original article~\cite{CoTo1983}, Couch and Torrence work with the coordinate $x=-\frac{1}{y}$.
\end{remark}
In terms of the more usual coordinate $r$,  the transformation can be expressed as:
\begin{equation} \label{CT} \Phi(t,r,\omega) = (t, \frac{rM}{r-M}, \omega), \end{equation} and
\[ \Phi^* g = \frac{M^2}{(r-M)^2} g \, .\]
The Couch-Torrence inversion is an involution that exchanges the horizon $\{r=M\}$ and infinity $\{R=0\}$ in $\Ext$ and fixes every point on the photon sphere $\{r=2M\}$. 

%
%
%

Things are in fact much simpler when expressed in terms of the Regge-Wheeler variable $r_*$
[Eq.~\eqref{rstar}], which, we recall, was chosen centered on the photon sphere. Indeed:
\begin{eqnarray}
r_*\left(\frac{rM}{r-M}\right)  &=& \frac{M^2}{r-M} + 2M\log\left( \frac{M}{r-M}\right)-(r-M) \nonumber \\
&=&-\left(r-M +2M \log\left(\frac{r-M}{M}\right) - \frac{M^2}{r-M} \right) \nonumber \\
&=& - r_* (r) \, .
\end{eqnarray}
So the Couch-Torrence inversion can be simply stated as $r_*\mapsto -r_*$.

Something that does not seem to have been noticed in the literature is that the Couch-Torrence inversion is in fact an \emph{isometry} (and not just a conformal isometry) of the conformally compactified extreme Reissner-Nordström exterior region $(\widehat{\Ext},\hat{g})$ with conformal factor $\Omega=1/r$.
This can be seen directly, observing that
\begin{equation} \label{CTIsomRescMet}\Phi^*\hat{g}=\Phi^*(\Omega^2g)=(\Omega\circ \Phi)^2 \Phi^*g=\frac{(r-M)^2}{r^2M^2}\frac{M^2}{(r-M)^2}g=\hat{g},\end{equation}
%
%
%
%
%
\begin{remark}
One may wonder if $\Omega=\frac{1}{r}$ is the only conformal factor $\Omega$ for which the Couch-Torrence inversion $\Phi$ is an isometry. It is not, and those conformal factors that have this property have the general form:
\begin{equation}
\Omega = \frac{f(t,r_*,\omega )}{r},
\end{equation}
where $f$ is an arbitrary (positive) function that is even in $r_*$. This follows directly from the intermediate steps in Equation~\eqref{CTIsomRescMet}, from which we can see that $\Omega$ must satisfy:
\begin{equation} \label{Omega_for_isometry}
\frac{rM}{r-M}\Omega(t,\frac{rM}{r-M},\omega)=r\Omega(t,r,\omega),
\end{equation}
which translates to the fact that the scalar field $r\Omega$ is invariant under the Couch-Torrence inversion, i.e. is an even function in $r_*$.
Defining $x=r/M$ and $\lambda(x) = x \Omega(t,xM,\omega)$, Eq.~\eqref{Omega_for_isometry}
becomes
\begin{equation} \label{F_for_isometry}
 \lambda \left(\frac{x}{x-1}\right) = \lambda (x) .
\end{equation}
There are as many conformal factors $\Omega$ making $\Phi$ an isometry as there
are smooth positive solutions of this equation. The case $\Omega = 1/r$
corresponds to the trivial solution $\lambda (x) = \mathrm{const} = M^{-1}$ of
Eq.~\eqref{F_for_isometry}.
A whole family of solutions of
Eq.~\eqref{F_for_isometry} is
\[ \lambda (x) = \frac{a_1(x - 1) +  a_2 x^2}{b_1(x - 1) + b_2 x^2},
 \quad (a_1, a_2, b_1, b_2)\in \mathbb{R}^4, \quad
 (a_1,a_2)\neq (0,0),  \quad
 (b_1,b_2)\neq (0,0) .
\]
For instance, for $(a_1, a_2, b_1, b_2) = (0, 1, 1, 0)$, we get
\[ \Omega = \frac{M r}{r - M} , \]
while $(a_1, a_2, b_1, b_2) = (1, 0, 0, 1)$ yields
\[ \Omega = \frac{M^2(r - M)}{r^3} . \]
\end{remark}
%
We also address the question of how one could interpret the Couch-Torrence inversion
on $\Int$ --- the interior of the black hole. It is apparent that it is not an endomorphism of $\Int$,
given that $rM/(r-M) < 0$ for $0<r<M$,
but we can instead view the coordinate expression~\eqref{CT} as defining a map from  $\Int$ into a manifold $\mathcal{N}=\R_v\times(-\infty,0)_r\times S^2$ equipped with the metric
 \[ g_{\mathcal{N}} = F(r)\d v^2 +2\d v \d r -r^2 \d \omega^2. \]
Using the more appropriate outgoing Eddington-Finkelstein coordinates on $\textrm{Int}$,
this map can be expressed as
 $\Phi : (\Int,g) \rightarrow (\mathcal{N},g_{\mathcal{N}})$,
 $(u,r,\omega)\mapsto (u+2r^*(r),\frac{rM}{r-M},\omega)$. Furthermore, one has:
 \[\Phi^*g_{\mathcal{N}}=\frac{M^2}{(r-M)^2}g. \] $\Phi$ is therefore a conformal isometry between $\Int$ and $\mathcal{N}$. Performing the change of coordinate $r'=-r$ in $\mathcal{N}$, we can identify $(\mathcal{N},g_{\mathcal{N}})$ with an extreme Reissner-Nordström black-hole with negative mass $-M$ and charge $Q=\pm M$ (expressed in ingoing Eddington-Finkelstein coordinates~\eqref{ef-ingoing})\footnote{This property was in fact already observed in \cite{LVK2014}.}.  In terms of $r'$, $\Phi^{-1}: \mathcal{N} \rightarrow \textrm{Int}$, $(t,r',\omega) \mapsto (t, \frac{r'M}{r'+M},\omega)$. As before, the conformal isometry becomes an isometry if both $g$ and $g_{\mathcal{N}}$ are conformally rescaled by $\Omega=\frac{1}{r}$.

These properties of the Couch-Torrence inversion, are summarised in the following theorem.
\begin{theorem} \label{ThmIsomCTModCT}
The Couch-Torrence inversion \eqref{CT} is an \textbf{isometry} of the compactified exterior of the extreme Reissner-Nordström spacetime $\widehat{\Ext}$ with conformal factor $R=1/r$, i.e.
\[ \Phi^* \hat{g} = \hat{g} \, ,~\mbox{with } \hat{g} = \frac{1}{r^2} g \, .\]
It fixes the photon sphere and exchanges the future event horizon $\scrh^+$ and the future null infinity $\scri^+$, as well as the past event horizon $\scrh^-$ and the past null infinity $\scri^-$.

Interpreting the coordinate expression of the Couch-Torrence inversion as a diffeomorphism from the interior of the black hole into another spacetime leads to an isometry from $(\mathrm{Int},\hat{g})$ onto a full negative mass extreme Reissner-Nordström spacetime with mass $-M$ and charge $Q=\pm M$, whose metric has been conformally rescaled by $1/r^2$. The curvature singularities of both spacetimes are each other's images under $\Phi$ and the extreme Reissner-Nordström horizon corresponds to the infinity of the negative mass extreme Reissner-Nordström spacetime.
\end{theorem}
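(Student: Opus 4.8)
The plan is to assemble the computations that precede the statement into the four asserted facts. \textbf{First}, I would establish the isometry $\Phi^*\hat g=\hat g$ on $\widehat{\Ext}$. The cleanest route is to pass to the Regge--Wheeler variable: since $\frac{1}{F}\d r^2=F\,\d r_*^2$, the physical metric reads $g=F(r)\,(\d t^2-\d r_*^2)-r^2\,\d\omega^2$, so that $\hat g=\frac{1}{r^2}g=\frac{F(r)}{r^2}(\d t^2-\d r_*^2)-\d\omega^2$. Because $\Phi$ acts simply as $r_*\mapsto -r_*$ with $t$ and $\omega$ fixed, it leaves $\d t^2-\d r_*^2$ and $\d\omega^2$ invariant, and the whole claim reduces to checking that the single coefficient $F(r)/r^2=(r-M)^2/r^4$ is even in $r_*$, i.e. invariant under $r\mapsto rM/(r-M)$. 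This is a one-line substitution (using $\Phi(r)-M=M^2/(r-M)$) and reproduces the direct verification \eqref{CTIsomRescMet}.

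\textbf{Second}, for the geometric statements I would read off the action of $\Phi$ in the compactifying variable $R=1/r$, where a short computation gives $R\mapsto \tfrac1M-R$, a reflection about $R=\tfrac1{2M}$. This manifestly fixes the photon sphere $r=2M$ and exchanges $\{R=0\}=\scri$ with $\{R=1/M\}=\{r=M\}$, the horizon. To obtain the finer matching $\scrh^\pm\leftrightarrow\scri^\pm$ I would track the Eddington--Finkelstein times: $u=t-r_*\mapsto t+r_*=v$ and $v\mapsto u$, so $\Phi$ swaps the outgoing and ingoing charts of Section~\ref{s:ERN_conformal_compact}; combined with the fact that $\Phi$ preserves $\partial/\partial t$ and hence the time orientation, future boundaries are sent to future boundaries and past to past.

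\textbf{Third}, on $\Int$ I would repeat the conformal computation to get $\Phi^*g_{\mathcal N}=\frac{M^2}{(r-M)^2}g$ (with $g_{\mathcal N}$ on $\mathcal N=\R_v\times(-\infty,0)_r\times S^2$ the target metric already introduced), whence $\Phi^*\hat g_{\mathcal N}=\hat g$ after rescaling both sides by $1/r^2$. The change of variable $r'=-r$ then turns $F(r)=\big(\tfrac{r-M}{r}\big)^2$ into $\big(\tfrac{r'+M}{r'}\big)^2$, the metric function of an extreme Reissner--Nordström spacetime of mass $-M$ (extremality forcing $Q=\pm M$), written in ingoing coordinates. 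Finally, tracking limits under \eqref{CT} gives $r=0\mapsto r=0$ for the curvature singularities (so they are each other's images) and $r=M\mapsto r\to-\infty$, i.e. the extreme horizon is sent to the infinity $r'\to+\infty$ of the negative-mass spacetime.

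The routine part is clearly the algebra; the genuine care lies in the \textbf{global bookkeeping}. I expect the main obstacle to be making the boundary statements rigorous: one must justify that $\Phi$, defined in the interior of $\Ext$ (resp. $\Int$), extends smoothly to the compactified boundaries so that $\scri^\pm$, $\scrh^\pm$ and the infinity of $\mathcal N$ are honest regular hypersurfaces on which the swap $u\leftrightarrow v$ is compatible with the causal orientation, and that on $\Int$ the map is correctly viewed as a diffeomorphism \emph{onto} $\mathcal N$ rather than an endomorphism. Once the extension and orientation conventions are fixed, the four assertions follow from the computations above.
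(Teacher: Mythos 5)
Your proposal is correct and takes essentially the same approach as the paper: the theorem is proved there by assembling exactly the computations you cite --- the conformal isometry $\Phi^*g=\frac{M^2}{(r-M)^2}\,g$, the identity $r_*\circ\Phi=-r_*$, the chain-rule cancellation \eqref{CTIsomRescMet}, and the interpretation of \eqref{CT} on $\mathrm{Int}$ as a conformal isometry onto $(\mathcal{N},g_{\mathcal{N}})$ followed by the substitution $r'=-r$. Your only departures are cosmetic: you verify the isometry via parity of $F(r)/r^2$ in $r_*$ (equivalent to \eqref{CTIsomRescMet}, and precisely the criterion \eqref{Omega_for_isometry} of the paper's remark that $r\Omega$ be even in $r_*$), and you make the boundary identification explicit through $R\mapsto \frac{1}{M}-R$ and $u\leftrightarrow v$, which the paper only writes out later, in Section~\ref{PeelingAtHorizonRNE}.
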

\begin{corollary} \label{ConfWEqInvarianceCT}
As a consequence, both the d'Alembertian $\square_{\hat{g}}$ and the scalar curvature $\mathrm{Scal}_{\hat{g}}$ outside the extreme Reissner-Nordström black hole, are invariant under the Couch-Torrence inversion (this can be easily checked by direct calculations). Therefore the conformal d'Alembertian
\[ \square_{\hat{g}}  + \frac{1}{6} \mathrm{Scal}_{\hat{g}}  \]
is invariant under the Couch-Torrence inversion.
\end{corollary}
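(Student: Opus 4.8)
The plan is to obtain the corollary as an immediate consequence of Theorem~\ref{ThmIsomCTModCT} via the naturality of metric invariants under isometries, rather than by carrying out the coordinate computation alluded to in parentheses. The key abstract fact I would invoke is that the Levi-Civita connection, the curvature and Ricci tensors, the scalar curvature, and the Laplace-Beltrami operator are all \emph{natural}: for any diffeomorphism $\Phi$ of a pseudo-Riemannian manifold and any metric $g$, they are built from $\Phi^* g$ in exactly the same way as from $g$. Concretely, this gives the two intertwining relations $\mathrm{Scal}_{\Phi^* g} = \mathrm{Scal}_g \circ \Phi$ and, for every smooth function $f$, $\square_{\Phi^* g}(f\circ \Phi) = (\square_g f)\circ \Phi$.

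First I would specialise these identities to $g = \hat{g}$ and to $\Phi$ the Couch-Torrence inversion. By Theorem~\ref{ThmIsomCTModCT} we have $\Phi^* \hat{g} = \hat{g}$, so substituting into the naturality relations yields directly $\mathrm{Scal}_{\hat{g}} = \mathrm{Scal}_{\hat{g}}\circ \Phi$ and $\square_{\hat{g}}(f\circ \Phi) = (\square_{\hat{g}} f)\circ \Phi$. These are precisely the asserted invariances: the scalar curvature equals its own pullback, and the d'Alembertian commutes with pullback by $\Phi$. I would then combine them to treat the full conformal operator. Since $\frac{1}{6}\mathrm{Scal}_{\hat{g}}$ acts by multiplication, invariance of the scalar curvature gives $\tfrac{1}{6}\mathrm{Scal}_{\hat{g}}\,(f\circ\Phi) = \tfrac{1}{6}(\mathrm{Scal}_{\hat{g}}\circ\Phi)(f\circ\Phi) = \bigl(\tfrac{1}{6}\mathrm{Scal}_{\hat{g}}\, f\bigr)\circ\Phi$, whence
\[
\Bigl(\square_{\hat{g}} + \tfrac{1}{6}\mathrm{Scal}_{\hat{g}}\Bigr)(f\circ\Phi) = \Bigl(\bigl(\square_{\hat{g}} + \tfrac{1}{6}\mathrm{Scal}_{\hat{g}}\bigr) f\Bigr)\circ\Phi,
\]
i.e. the conformal d'Alembertian is $\Phi$-invariant, as claimed.

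The main point to be careful about is conceptual rather than computational: one must state precisely what \enquote{invariance under $\Phi$} means for an operator (commutation with the pullback $\Phi^*$) as opposed to for a scalar field (equality with its own pullback), and one must reconcile the density formulation of \S\ref{IntroConformaldAlembertian}, where $\Box_{\mathbf{c}}\colon \mathcal{E}[-1]\to\mathcal{E}[-3]$ acts on conformal densities, with the metric representative $\square_{\hat{g}} + \frac{1}{6}\mathrm{Scal}_{\hat{g}}$ acting on functions used here (noting $\hat{g}^{ab}\hat{R}_{ab} = \mathrm{Scal}_{\hat{g}}$, and that $\tfrac{1}{6}$ is the four-dimensional conformal coupling $\tfrac{n-2}{4(n-1)}$). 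There is no genuine analytic obstacle, so the work is entirely bookkeeping of conventions.

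As an alternative route that avoids even the isometry refinement, I would note that $\Phi$ is a conformal isometry of $g$ (Section~\ref{CTSection}), hence preserves the conformal class $\mathbf{c} = [g]$; the conformal d'Alembertian $\Box_{\mathbf{c}}$ of \S\ref{IntroConformaldAlembertian} is intrinsic to $\mathbf{c}$ and is therefore automatically $\Phi$-invariant as an operator on densities, and passing to the $\hat{g}$-representative recovers the statement. This second argument is conceptually cleaner but requires threading through the density identifications of \S\ref{IntroConformaldAlembertian}, whereas the isometry argument above is the most economical given Theorem~\ref{ThmIsomCTModCT}.
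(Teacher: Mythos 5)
Your proposal is correct and follows essentially the same route as the paper: the invariance is deduced from the isometry $\Phi^*\hat{g}=\hat{g}$ of Theorem~\ref{ThmIsomCTModCT}, which is exactly what the word \enquote{consequence} in the statement refers to. The only difference is presentational: where the paper's parenthetical appeals to a direct coordinate check of the invariance of $\square_{\hat{g}}$ and $\mathrm{Scal}_{\hat{g}}$, you make the implication precise via the standard naturality of metric invariants under diffeomorphisms (pullback intertwining for $\square$, equivariance of $\mathrm{Scal}$), which is a sound and arguably cleaner formalization of the same argument.
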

It is natural to ask if there could be an isometry of the whole domain of outer communication of a black hole with a non-degenerate horizon, that would exchange the horizon and conformal infinity. There are two arguments against this.
First, in the stationary case, it is a general feature of
stationary asymptotically flat spacetimes that the future null infinity $\scri^+$
of the $1/r$-compactification
is a \emph{degenerate} Killing horizon with respect to the vector field $\partial/\partial u$
of the outgoing Eddington-Finkelstein coordinates, which is a Killing
vector of the conformal metric $\hat{g} = \Omega^2 g$ as soon as $\Omega$ is a function of $r$ only. On the other side, the future event horizon $\scrh^+$ is a Killing horizon
with respect to the vector field $\partial/\partial v$ of the ingoing Eddington-Finkelstein
coordinates, which is a Killing vector of both $g$ and $\hat{g}$
for $\Omega = \Omega(r)$. It is easy to see that the surface gravity $\kappa$ of $\scrh^+$
is conformally invariant. If $\Phi$ were to be an isometry mapping $\scrh^+$
to $\scri^+$, it could not map a non-degenerate Killing horizon $(\kappa\neq 0)$
to a degenerate one $(\kappa= 0)$; hence  $\scrh^+$ has to be degenerate.
Second and more generally, spacelike infinity is a conformal singularity unlike the bifurcation sphere. More precisely, null geodesics along the future or past horizons will reach the bifurcation sphere with finite affine parameters, in contrast, null geodesics on $\scri^\pm$ are complete and these would be exchanged by the transformation.

\section{Peeling at the extreme Reissner-Nordström horizon using the Couch-Torrence inversion} \label{PeelingRNE}
Since the Couch-Torrence inversion on $\widehat{\Ext}$ is an isometry that exchanges the horizon and conformal infinity $\{R=0\}$, knowledge about either of them will translate to information about the other. Our first goal is to establish a peeling property at the future null infinity $\scri^+$. This feature has been studied at infinity in other spacetimes and, following~\cite{MaNi2009}, we will show that the result subsists in Reissner-Nordström spacetime (by simply observing that the estimates can be performed as in the Schwarzschild case).  We will then translate this into a peeling property at the degenerate horizon $\scrh^+$.
 We explain the essential steps of the proof for the convenience of the reader.

\subsection{Peeling at infinity on the extreme Reissner-Nordström metric}

In this section we extend the work of \cite{MaNi2009} to the extreme Reissner-Nordström spacetime. The goal is to characterise the regularity at $\scri^+$ of the solution to the conformal wave equation in terms of the regularity and decay of the initial data. Once the regularity is known in an arbitrarily small neighbourhood of $i^0$, standard results allow to propagate it in a complete neighbourhood of $\scri^+$ (in fact to the full domain $\widehat{\Ext}$) provided the initial data have the same degree of smoothness (see for example Friedrich \cite{HFri2004}). In order to control the regularity within a small neighbourhood of $i^0$, we prove energy estimates both ways and at all orders between the part of $\scri^+$ and the part of the $\{t=0\}$ Cauchy hypersurface that are contained in this neighbourhood. The energy current is associated with a Morawetz vector field adapted to the geometry and the various levels of regularity are obtained by considering the energy of successive partial derivatives of the field. This energy current satisfies an approximate conservation law as we approach $i^0$. The size of the neighbourhood of $i^0$ is adapted so as to allow a control of the error terms by the energy on the slices of a well-chosen foliation; the estimates then follow by Grönwall's inequality. All the estimates are established for solutions associated with smooth and compactly supported data. Given the linear nature of the equation, their validity then naturally extends by density to the function spaces constructed by completing the space of smooth compactly supported functions in the norms defined by the energies.

We work in the exterior block $\Ext$ equipped with the \emph{unphysical} metric $\hat{g}=R^2g$ given by~\eqref{CompExtRNMetEF} expressed in outgoing Eddington-Finkelstein coordinates with an inversion in $r$
\[ R = \frac{1}{r} \, ,~ u = t-r_* \, ,~ \omega \, .\]
The inverse metric is
\[ \hat{g}^{-1} = -2\p_u\p_R-R^2(1-MR)^2\p_R^2-\eth\bar\eth \, .\]
The scalar curvature of $\hat{g}$ has the form
\begin{equation}
\mathrm{Scal}_{\hat{g}} = 12 M R (MR - 1 )
\end{equation}
and the induced $4$-volume form reads
\begin{equation} \label{4Vol}
\d^4\mathrm{Vol} = \d u \wedge \d R \wedge \d^2 \omega \, .
\end{equation}
We remark that this specifies our global choice of orientation to be that of the basis $(\partial_u,\partial_R,\partial_\theta,\partial_\phi)$.

Let us denote by $\hat{\nabla}$ the Levi-Civita connection induced by $\hat{g}$. Since the scalar curvature of the physical metric vanishes, $\mathrm{Scal}_{g}=0$, it follows from~\eqref{CWaveScalar} that a scalar field $\psi$ satisfies the \emph{wave equation} for $g$ outside the black hole
\begin{equation} \label{WE}
\square_g \psi = 0 \, ,
\end{equation}
if and only if $\phi := R \psi$ satisfies
\begin{equation} \label{CWERNE}
\square_{\hat{g}} \phi + 2MR(MR-1) \phi = 0 \, ,
\end{equation}
where the d'Alembertian for $\hat{g}$ is given by
\begin{equation} \label{dAlembertianRNE}
\square_{\hat{g}} f = - 2 \frac{\partial^2\,f}{\partial u\partial R}
- \frac{\partial}{\partial R} \left( R^2(1-MR)^2 \frac{\partial f}{\partial R } \right)
 -\Delta_{S^2} f \, .
\end{equation}

Study of the peeling commences with the choice of an appropriate energy current. As in Minkowski and Schwarzschild spacetimes, our choice will be associated with the family of observers given by the Morawetz vector field
\begin{equation} \label{Morawetz}
K = u^2 \partial_u - 2 (1+uR) \partial_R \, .
\end{equation}
obtained, as in the Schwarzschild case, by transposition of the formula for the Morawetz field in Minkowski space time, expressed in outgoing light-cone coordinates. It satisfies
\begin{equation} \label{NormMorawetz}
\hat{g} (K,K) = u^2 \left( 4 + 4Ru + R^2u^2 (1-MR)^2 \right) \, .
\end{equation}
Since $4 + 4Ru + R^2u^2 = (2+Ru)^2$, \eqref{NormMorawetz} is positive in a neighbourhood of $i^0$ (see Lemma \ref{ApproxCloseI0} below). This and the expression \eqref{Morawetz} entail that $K$ is timelike and future-oriented in a neighbourhood of $i^0$.
It is \emph{not} a Killing (or even conformal Killing) vector field of $\hat{g}$, as can be seen from its Killing form:
\begin{equation} \label{MorawetzKillingFormRNE}
\hat{\nabla}_{(a} K_{b)} \d x^a \d x^b = 2 M R^2 \left(
Ru(1 - M R) - 2 M R + 3 \right) \mathrm{d} u\otimes \mathrm{d} u \, .
\end{equation}
%
Consider now the stress-energy tensor for the free wave equation $\Box_{\hat{g}}\phi=0$ on the compactified spacetime
\[ T_{ab} (\phi) = \hat{\nabla}_a {\phi} \hat{\nabla}_b {\phi} - \frac12 \langle \hat{\nabla} {\phi} , \hat{\nabla} {\phi} \rangle_{\hat{g}} \hat{g}_{ab} \]
and define 
\begin{equation} \label{EnCurrentRNE}
J_a (\phi) := K^b T_{ab} (\phi) \, .
\end{equation}

Our analysis will concentrate on a neighbourhood of $i^0$ defined for $u_0 \ll -1$,
\begin{equation}
\Omega_{u_0} = \{ t\geq 0 \} \cap \{ u<u_0 \} \, ,
\end{equation}
that we foliate with the hypersurfaces
\begin{equation}
\mathcal{H}_{s,u_0} := \{ u=-sr_* \, , ~ u < u_0 \}, \quad 0\leq s \leq 1 \, ,
\end{equation}
where $\mathcal{H}_{0,u_0}$ is considered as the limit of the $\mathcal{H}_{s,u_0}$ hypersurfaces as $s\rightarrow 0$ and is in fact
\[ \mathcal{H}_{0,u_0} = \scri^+ \cap \{ u < u_0 \} =: \scri^+_{u_0} \, .\]
Having a regular slicing between $\scri^+_{u_0}$ and $\mathcal{H}_{1,u_0}=\{t=0\}\cap\{u < u_0\}$ gives a convenient way of controlling the energies on either hypersurface in terms of the other via Grönwall estimates. Another important hypersurface is part of the future boundary of $\Omega_{u_0}$
\begin{equation}
\mathcal{S}_{u_0} = \{ t \geq 0\} \cap \{ u = u_0\} \, .
\end{equation}

We orient each of the $\mathcal{H}_{s,u_0}$ using the future pointing normal, i.e. in the direction of decreasing $s$.
The energy flux through any slice $\mathcal{H}_{s,u_0}$ is given by:
\begin{eqnarray}
\mathcal{E}_{\mathcal{H}_{s,u_0}}(\phi)&=& \int_{\mathcal{H}_{s,u_0}} \star J(\phi), \nonumber \\
 &=& \int_{ ]-\infty , u_0 [ \times S^2} \bigg( R^2 (1-MR)^2 u^{2}  \frac{\partial\,\phi}{\partial u} \frac{\partial\,\phi}{\partial R} + u^{2} \left( \frac{\partial\,\phi}{\partial u} \right)^{2} \nonumber \\
&& + \frac{R^2 (1-MR)^2}{2s} \left( (Ru)^2\left(1-MR\right)^2 + 2(2-s) Ru + 2 (2-s)\right) \left( \frac{\partial\,\phi}{\partial R} \right)^{2} \nonumber \\
&&+ \left( R u + \frac{(Ru)^2 (1-MR)^2}{2s} + 1 \right) \left\vert \nabla_{S^2} \phi \right\vert^{2}  \bigg) \d u \wedge \d^2 \omega \\
&=&  \int_{ ]-\infty , u_0 [ \times S^2} \bigg( R^2 (1-MR)^2 u^{2}  \frac{\partial\,\phi}{\partial u} \frac{\partial\,\phi}{\partial R} + u^{2} \left( \frac{\partial\,\phi}{\partial u} \right)^{2} \nonumber \\
&& + \frac{Rr_* (1-MR)^2}{2}\frac{R}{\vert u \vert} \left( (R\vert u\vert)^2\left(1-MR\right)^2 - 2(2-s) R\vert u\vert + 2 (2-s)\right) \left( \frac{\partial\,\phi}{\partial R} \right)^{2} \nonumber \\
&&+ \left( R\vert u \vert \left( -1 + \frac{R r_* (1-MR)^2}{2} \right) + 1 \right) \left\vert \nabla_{S^2} \phi \right\vert^{2}  \bigg) \d u \wedge \d^2 \omega \, . \label{EnergyHsSecondForm}
\end{eqnarray}
where $\star$ is the Hodge dual defined on $1$-forms by the identity:
\begin{equation}
\alpha \wedge \star \beta = g(\alpha,\beta) \d^4\textrm{Vol} =g^{ab}\alpha_a\beta_b \d^4\textrm{Vol}.
\end{equation}
%
%
\begin{remark}
Alternatively, one can use the formula:
\[\mathcal{E}_{\mathcal{H}_{s,u_0}}(\phi)=\int_{ ]-\infty , u_0 [ \times S^2} J_a (\phi) \tilde{n}^a \, \tilde{l} \lrcorner \d^4\mathrm{Vol},\]
where $\tilde{n}$ is a future-oriented normal vector field to $\mathcal{H}_{s,u_0}$ and $\tilde{l}$ a transverse vector field to all $\mathcal{H}_{s,u_0}$ such that $\hat{g}(\tilde{l},\tilde{n})=1$.
A future-oriented normal vector field to $\mathcal{H}_{s,u_0}$ is easily obtained from:
\[\begin{aligned} n := \hat{g}^{-1} ( \d (u/ r_*) ) &=\hat{g}^{-1} \left( \frac{1}{r_*} \d u -\frac{u}{r_*^2} \frac{1}{F} \d r \right)  = - \frac{1}{r_*}\left( 1-s \right) \partial_R - \frac{u}{(1-MR)^2(r_*R)^2} \partial_u \, . \end{aligned} \]
Note that on $\mathcal{H}_{0,u_0}$, $n$ reduces to $-u\partial_u$  which is future oriented for $u<0$, i.e. where the foliation makes sense. It is straightforward to check that:
\[ l = \frac{(1-MR)^2(r_*R)^2}{u} \partial_R \, , \]
satisfies $\hat{g}(l,n)=1$.
In order to simplify the expressions of the two vectors fields, we multiply $n$ by $- \frac{(1-MR)^2(r_*R)^2}{u}$ and $l$ by $- \frac{u}{(1-MR)^2(r_*R)^2}$ leading to:
\begin{equation} \label{lnHs}
\tilde{n} = \partial_u + (1-MR)^2 R^2\frac{r_*}{u}\left( 1-s \right) \partial_R \, ,~ \tilde{l} = - \partial_R \, .
\end{equation}
\end{remark}
We also orient $\mathcal{S}_{u_0} $ using the future oriented normal, i.e. $-\partial_R$ and we have
\begin{eqnarray}
\mathcal{E}_{\mathcal{S}_{u_0}}(\phi)&=& \int_{\mathcal{S}_{u_0}} \star J(\phi), \nonumber \\
&=& -\frac12 \int_{\mathcal{S}_{u_0}} \bigg( \left( 4 +4 R u + (Ru)^2 (1-MR)^2 \right) \left(\frac{\partial\,\phi}{\partial R}\right)^{2} \nonumber \\
&& \hspace{0.8in} + u^2 \vert \nabla_{S^2} \phi \vert^2 \bigg) \d R \wedge \d^2 \omega \, ,
\end{eqnarray}
which is non negative by the dominant energy condition since $\mathcal{S}_{u_0}$ is a null hypersurface.

In order to derive our fundamental estimates, we will assume that $\phi$ is a solution to~\eqref{CWERNE}, with \emph{smooth compactly supported initial data}. We will then be able to extend the estimates by density to the completion in the energy norm. We give a simplified equivalent expression of the energy flux across $\mathcal{H}_{s,u_0}$. This is identical to the corresponding result in the Schwarzschild geometry~\cite[Lemma 4.1]{MaNi2009}. We give the details of the proof since they will be useful when extending the results to more general geometries in Section \ref{LooseEndCT}.
\begin{proposition}\label{EnEqFormRNE}
Using the fact that $Ru$ remains bounded on $\mathcal{H}_{s,u_0}$ one has the following equivalence uniformly in $s \in [0,1]$, provided $u<u_0<0$ and $\vert u_0\vert $ sufficiently large :
\begin{equation} \mathcal{E}_{\mathcal{H}_{s,u_0}}(\phi)\simeq  \int_{ ]-\infty , u_0 [ \times S^2} \bigg(  u^{2}  \left( \frac{\partial\,\phi}{\partial u} \right)^{2} + \frac{R}{\vert u \vert} \left( \frac{\partial\,\phi}{\partial R} \right)^{2} + \left\vert \nabla_{S^2} \phi \right\vert^{2}  \bigg) \d u \wedge \d^2 \omega \, .\label{EnergyHsRNE}
\end{equation}
\end{proposition}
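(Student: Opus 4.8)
The plan is to reduce the claimed equivalence to a pointwise comparison of quadratic forms in the first derivatives of $\phi$. Writing $X=\partial_u\phi$, $Y=\partial_R\phi$ and $Z=\nabla_{S^2}\phi$, the integrand of~\eqref{EnergyHsSecondForm} is the quadratic form
\[ Q = u^2 X^2 + R^2(1-MR)^2 u^2\, XY + c\, Y^2 + d\,|Z|^2 , \]
in which the angular term decouples from the $(X,Y)$ block, with
\[ c = \frac{Rr_*(1-MR)^2}{2}\,\frac{R}{|u|}\Big((R|u|)^2(1-MR)^2 - 2(2-s)R|u| + 2(2-s)\Big), \quad d = R|u|\Big(\tfrac{Rr_*(1-MR)^2}{2}-1\Big)+1 . \]
Since the measure $\d u\wedge\d^2\omega$ is common to both sides, it suffices to prove that $Q$ is equivalent, pointwise on $\mathcal{H}_{s,u_0}$ and with constants uniform in $s\in[0,1]$ once $|u_0|$ is large enough, to the model form $u^2X^2 + \frac{R}{|u|}Y^2 + |Z|^2$.

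The key input is the behaviour of the coefficients near $i^0$. On $\mathcal{H}_{s,u_0}$ one has $u=-sr_*$, hence $Ru=-s\,Rr_*$ and $R|u|=s\,Rr_*$; by Lemma~\ref{ApproxCloseI0}, $Rr_*\to 1$ and $MR\to 0$ as $u\to-\infty$, so that $Ru$ and $R|u|$ are bounded (tending respectively to $-s$ and $s$) and $(1-MR)\to1$ uniformly in $s$. First I would record the resulting leading-order values: $a:=u^2$ is exactly the $X^2$-weight; the bracket defining $c$ tends to $3s^2-6s+4$, which stays in $[1,4]$ for $s\in[0,1]$ and is in particular bounded away from $0$ and $\infty$, whence $c\simeq R/|u|$; and $d\to 1-\tfrac{s}{2}\in[\tfrac12,1]$, whence $d\simeq1$. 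All error terms are of the form $(Rr_*-1)$ or $MR$ times a bounded factor, so they are made uniformly small by taking $|u_0|$ large, which is exactly the hypothesis.

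It then remains to handle the cross term, i.e. to show the $2\times2$ block $\begin{pmatrix} a & b \\ b & c\end{pmatrix}$, with $b=\tfrac12(R|u|)^2(1-MR)^2$ (so that the cross term equals $2b\,XY$), is positive definite with size comparable to $\operatorname{diag}(a,c)$. The elementary fact I would invoke is that for $a,c>0$ and $b^2\le\theta\, ac$ with a fixed $\theta<1$ one has $(1-\sqrt\theta)(aX^2+cY^2)\le aX^2+2bXY+cY^2\le(1+\sqrt\theta)(aX^2+cY^2)$. So the whole point is the uniform determinant estimate $b^2\le\theta\,ac$. Near $i^0$, $b\to\tfrac12 s^2$ is bounded while $ac=\tfrac12\,R|u|\,Rr_*(1-MR)^2\cdot(\text{bracket})\to\tfrac{s}{2}(3s^2-6s+4)$, so
\[ \frac{b^2}{ac}\longrightarrow \frac{s^3}{2(3s^2-6s+4)}\le \frac{s^3}{2}\le\frac12 \qquad (s\in[0,1]). \]
Combining this with $aX^2+cY^2\simeq u^2X^2+\frac{R}{|u|}Y^2$ and $d\simeq1$ yields $Q\simeq u^2X^2+\frac{R}{|u|}Y^2+|Z|^2$, and integration gives~\eqref{EnergyHsRNE}.

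The main obstacle I anticipate is the uniformity in $s$ all the way down to the degenerate endpoint $s=0$, where $\mathcal{H}_{s,u_0}$ degenerates to $\scri^+$: there the weight $c$, the cross term and the product $ac$ all vanish, so one cannot merely bound ratios by their limits but must track the rates at which they tend to $0$. The computation above shows $b^2/(ac)\to s^3/(2\cdot\text{bracket})\to0$ as $s\to0$, so the off-diagonal is in fact negligible there; but making the constants in $\simeq$ genuinely independent of $s$ near $s=0$ — and checking that the $|u_0|$-dependent corrections spoil neither the strict bound $b^2\le\theta\,ac$ nor the positivity of the bracket (which, as a quadratic in $w=R|u|$, has negative discriminant $-4s(2-s)$ at leading order) nor that of $d$ — is the delicate quantitative step.
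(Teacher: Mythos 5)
Your proposal is correct and follows essentially the same route as the paper's proof: the same pointwise, slice-by-slice comparison of the coefficients in \eqref{EnergyHsSecondForm} using Lemma~\ref{ApproxCloseI0} (with $R\vert u\vert = sRr_*$), the same key identity $3s^2-6s+4=3(s-1)^2+1\geq 1$ controlling the bracket $P$, and the same lower bound $\approx \tfrac12$ for the angular weight $d$. The only difference is presentational: you absorb the cross term through the determinant condition $b^2\leq\theta\,ac$ with $\theta\approx\tfrac12<1$, while the paper uses the equivalent weighted Young inequality with parameter $\lambda\in(\tfrac{1}{\sqrt 2},1)$ — the quantitative content of the two steps is identical.
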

The proof uses the following obvious results.
\begin{lemma}\label{ApproxCloseI0}
Let $\varepsilon >0$, then one can find $u_0<0$, $\vert u_0\vert $ large enough, such that in $\Omega_{u_0}$,
\[ 1 < Rr_* < 1+\varepsilon  \, ,~ 0 < R\vert u\vert  < 1+\varepsilon  \, ,~ 1-\varepsilon < (1-MR)^2 <1 \, .\]
\end{lemma}
\noindent{\bf Proof of Proposition \ref{EnEqFormRNE}.} In $\Omega_{u_0}$, we have
\[ 1 + R \vert u \vert \left( \frac{Rr_* (1-MR)^2 }{2}-1\right) \geq 1 - \frac{(1+\varepsilon)R \vert u\vert}{2} \geq \frac{1-\varepsilon'}{2} \]
where $\varepsilon' \rightarrow 0$ as $\varepsilon \rightarrow 0$. The coefficient of $(\partial_R \phi )^2$ involves the expression
\[ P( R\vert u \vert ) := (R\vert u\vert)^2\left(1-MR\right)^2 - 2(2-s) R\vert u\vert + 2 (2-s) \]
multiplied by a factor that is equivalent to $\frac12 \frac{R}{\vert u \vert}$. Restricting to $\mathcal{H}_{s,u_0}$ for $0<s\leq 1$ fixed, this becomes
\begin{eqnarray*}
P(R\vert u \vert ) &=& (Rr_*)^2s^2\left(1-MR\right)^2 - 2(2-s) Rr_*s + 2 (2-s) \\
&\geq & (1-\varepsilon) s^2 + 2(2-s) (1- (1+\varepsilon ) s)\\
&&= s^2 + 2 (2-s)(1-s) -\varepsilon (s^2 + 2s(2-s) ) \\
&& = 3 (s-1)^2 +1 - \varepsilon (s^2 + 2s(2-s) ) \\
&& \geq 1 - \varepsilon''
\end{eqnarray*}
where $\varepsilon'' \rightarrow 0$ as $\varepsilon \rightarrow 0$. And for $s =0$, $P(R\vert u \vert ) = 4$.
The coefficient of $\frac{\partial\,\phi}{\partial u} \frac{\partial\,\phi}{\partial R}$ is $R^2 (1-MR)^2 u^{2}$ and
\[ R^2 (1-MR)^2 u^{2} = \sqrt{\frac{R}{\vert u \vert}} \vert u \vert (1-MR)^2 (R\vert u\vert )^{3/2} \\
\leq  (1+\varepsilon)^{3/2} \sqrt{\frac{R}{\vert u \vert}} \vert u \vert \, . \]
Hence,
\begin{eqnarray*}
\left\vert R^2 (1-MR)^2 u^{2} \frac{\partial\,\phi}{\partial u} \frac{\partial\,\phi}{\partial R} \right\vert \leq (1+\varepsilon)^{3/2} \left( \frac{\lambda^2}{2} \frac{R}{\vert u \vert} \left( \frac{\partial\,\phi}{\partial R}\right)^2 + \frac{1}{2\lambda^2} u^2 \left( \frac{\partial\,\phi}{\partial u}\right)^2 \right) \, .
\end{eqnarray*}
If we can choose $\lambda >0$ such that $\frac{\lambda^2}{2} < \frac12$ and $\frac{1}{2\lambda^2} <1$, this proves the proposition. This is obviously possible since the two inequalities reduce to $\frac{1}{\sqrt{2}} < \lambda < 1 $.\qed

Note that for $s=0$, the equivalence is in fact an equality as can be seen from \eqref{EnergyHsSecondForm}:
\begin{equation} \label{EnergyScriRNE}
{\cal E}_{\scri^+_{u_0}} (\phi ) = \int_{\scri^+_{u_0}} \bigg( u^2 \bigg( \frac{\partial\,\phi}{\partial u} \bigg)^2 + \vert \nabla_{S^2} \phi\vert ^2 \bigg) \d u \wedge \d^2 \omega \, .
\end{equation}

The main tool for obtaining our basic energy estimates is Stokes theorem applied to
\[ \d(\star J(\phi))=\hat{\nabla}^aJ_a(\phi)d^4\mathrm{Vol}. \]
Integrating over $\Omega_{u_0}$, with $\phi \in \mathcal{C}^\infty(\widehat{\Ext})$ supported away from $i^0$, leads to the
fundamental energy identity
\begin{equation} \label{FEI}
\mathcal{E}_{\scri^+_{u_0}} (\phi) + \mathcal{E}_{\mathcal{S}_{u_0}} (\phi ) - \mathcal{E}_{\mathcal{H}_{1,u_0}} (\phi ) = \int_{\Omega_{u_0}} \hat{\nabla}^a J_a (\phi) \d^4\mathrm{Vol} \, .
\end{equation}
Since $K^a$ is not a Killing vector field, this is only an approximate conservation law and the error terms are determined by
\begin{eqnarray}
\hat{\nabla}^a J_a (\phi) &= & \hat{\nabla}^{(a} K^{b)} T_{ab} - \frac{1}{6} \mathrm{Scal}_{\hat{g}} \phi \nabla_{K} \phi  \nonumber \\
&= & -2 \, {\left(R^{2} M^{2} - R M\right)} u^{2} \phi \frac{\partial\,\phi}{\partial u} + 4 \, {\left(R^{2} M^{2} - R M + {\left(R^{3} M^{2} - R^{2} M\right)} u\right)} \phi \frac{\partial\,\phi}{\partial R}  \nonumber \\
&& - 2 \, {\left(2 \, R^{3} M^{2} - 3 \, R^{2} M + {\left(R^{4} M^{2} - R^{3} M\right)} u\right)} \left( \frac{\partial\,\phi}{\partial R}\right)^{2}. \label{ApproxConsLawRNE}
\end{eqnarray}
Provided $u_0\ll-1$ is large enough in absolute value, we have
\begin{equation} \label{LocEstErrTermsRNE}
\vert \hat{\nabla}^a J_a (\phi )\vert \lesssim \phi^2 + u^2 (\partial_u \phi )^2 + R^2 (\partial_R \phi )^2 \, .
\end{equation}
The integral on $\Omega_{u_0}$ on the right hand-side of \eqref{FEI} can be decomposed into an integral in $s$  of integrals on the slices $\mathcal{H}_{s,u_0}$. This is done by choosing an appropriate foliation chart that we describe by an identifying vector field $\nu$, transverse to all $\mathcal{H}_{s,u_0}$, and such that $\nu (s) = -1$ (so that $\nu$ be future oriented), for instance
\begin{equation} \label{VectVHs}
\nu=\frac{(r_*R)^2(1-MR)^2}{u}\frac{\partial}{\partial R} = l \, .
\end{equation}
The $4$-volume measure can then be decomposed as follows
\begin{equation} \label{Splitting4Vol}
\d^4\mathrm{Vol} = -\d s \wedge \left( l \lrcorner \d^4\mathrm{Vol} \right)
\end{equation}
where $\left( l \lrcorner \d^4\mathrm{Vol} \right)\vert_{\mathcal{H}_{s,u_0}}$ is a $3$-measure on $\mathcal{H}_{s,u_0}$ that is equivalent to
\begin{equation} \label{Equiv3Vol}
\frac{1}{ u } \partial_R \lrcorner \d^4\mathrm{Vol} = \frac{1}{ \vert u\vert} \d u \wedge \d^2 \omega \, .
\end{equation}
Hence
\begin{align}
\int_{\Omega_{u_0}} \hat{\nabla}^a J_a (\phi) \d^4\mathrm{Vol} &= \int_{[0,1]} \left( \int_{\mathcal{H}_{s,u_0}}\hat{\nabla}^a J_a (\phi) \, l \lrcorner \d^4\mathrm{Vol} \right) \d s \nonumber \\
&\simeq \int_{[0,1]} \left( \int_{]-\infty ,u_0 [ \times S^2}  \hat{\nabla}^a J_a (\phi) \frac{1}{\vert u\vert} \d u \d^2 \omega \right) \d s \, .
\end{align}
The control of the error terms \eqref{LocEstErrTermsRNE} then gives
\begin{align*}
\int_{]-\infty ,u_0 [ \times S^2}  \hat{\nabla}^a J_a (\phi) \frac{1}{\vert u\vert} \d u \d^2 \omega &\lesssim \int_{]-\infty ,u_0 [ \times S^2}  \left( \phi^2 + u^2 (\partial_u \phi )^2 + R^2 (\partial_R \phi )^2 \right) \frac{1}{\vert u\vert} \d u \d^2 \omega \\
&\lesssim \int_{]-\infty ,u_0 [ \times S^2}  \left( \phi^2 + u^2 (\partial_u \phi )^2 + R^2 (\partial_R \phi )^2 \right) \d u \d^2 \omega  \, .
\end{align*}
The Poincaré-type estimate obtained in \cite{MaNi2009} (see Lemma 4.2 and Corollary 4.1 of that paper) implies
\[  \int_{]-\infty ,u_0 [ \times S^2} \phi^2 \d u \d^2 \omega \lesssim \int_{]-\infty ,u_0 [ \times S^2} u^2 (\partial_u\phi)^2 \d u \d^2 \omega \]
and it follows that
\[ \int_{]-\infty ,u_0 [ \times S^2}  \hat{\nabla}^a J_a (\phi) \frac{1}{\vert u\vert} \d u \d^2 \omega \lesssim \mathcal{E}_{\mathcal{H}_{s,u_0}} (\phi) \, .\]
This allows us to obtain estimates both ways between $\mathcal{E}_{\scri^+_{u_0}} (\phi) + \mathcal{E}_{\mathcal{S}_{u_0}} (\phi )$ and $\mathcal{E}_{\mathcal{H}_{1,u_0}} (\phi )$ using Grönwall's Lemma.

In order to study higher order estimates we commute derivatives into the equation, however, this requires a little more care because new terms will appear, some of which do not vanish at infinity. This can be seen typically for the equation for $\partial_R \phi$ that is obtained by commuting $\partial_R$ into \eqref{CWERNE}. The resulting equation reads
\[ \square_{\hat{g}} (\partial_R \phi ) = \left[ \square_{\hat{g}} , \partial_R \right] \phi - 2MR(MR-1) \partial_R \phi - 4 M^2 R \phi + 2M \phi \, .\]
Using the expression \eqref{dAlembertianRNE} of $\square_{\hat{g}}$, we see that
\begin{align*}
\left[ \square_{\hat{g}} , \partial_R \right] &= \partial_R (R^2(1-MR)^2) \frac{\partial^2}{\partial R ^ 2} + \partial_R (2R(1-MR)^2) \frac{\partial}{\partial R} \\
&= \left( 2R (1-MR)^2 - 2MR^2 (1-MR) \right) \frac{\partial^2}{\partial R ^ 2} \\
& + \left( 2 (1-MR)^2 - 2MR (1-MR) \right) \frac{\partial}{\partial R} \, .
\end{align*}
Hence, if $\phi$ is a solution to \eqref{CWERNE}, $\partial_R \phi$ satisfies
\begin{equation}
\begin{aligned}
\square_{\hat{g}} (\partial_R \phi ) &=\left( 2R (1-MR)^2 - 2MR^2 (1-MR) \right) \frac{\partial}{\partial R} (\partial_R \phi ) \\
& + \left( 2 (1-MR)^2 - 2MR (1-MR) \right) \partial_R \phi  \\
&- 2MR(MR-1) \partial_R \phi - 4 M^2 R \phi + 2M \phi \, . \label{RNEEqDRphi}
\end{aligned}
\end{equation}
We can obtain an approximate conservation law for $\partial_R \phi$ by defining the energy current
\[ J_a (\partial_R \phi ) := K^b T_{ab} (\partial_R \phi )\]
and calculating its divergence using equation \eqref{RNEEqDRphi}
\begin{equation}
\begin{aligned}
\hat{\nabla}^a J_a (\partial_R \phi ) = & \, \hat{\nabla}^{(a} K^{b)} T_{ab} (\partial_R \phi ) \\
&+ \left( \hat{\nabla}_K (\partial_R \phi ) \right) \left( 2R (1-MR)^2 - 2MR^2 (1-MR) \right) \frac{\partial}{\partial R} (\partial_R \phi ) \\
& + \left( \hat{\nabla}_K (\partial_R \phi ) \right)\left( 2 (1-MR)^2 - 2MR (1-MR) \right) \partial_R \phi  \\
& + \left( \hat{\nabla}_K (\partial_R \phi ) \right) \left( - 2MR(MR-1) \partial_R \phi - 4 M^2 R \phi + 2M \phi \right) \, . \label{ApproxConsLawRNEDRphi}
\end{aligned}
\end{equation}
Both the terms $2 (1-MR)^2 \partial_R \phi $ and $2M \phi$ in the right hand-side of \eqref{RNEEqDRphi} have no decay at infinity and the corresponding error terms in the approximate conservation law \eqref{ApproxConsLawRNEDRphi} for $\partial_R \phi$, once integrated on $\mathcal{H}_{s,u_0}$, cannot be directly controlled by the energy for either $\partial_R \phi$ or $\phi$. However, the geometry of our foliation is particular since the leaves are getting closer and closer as we approach $i^0$. When we write the $4$-volume measure as the exterior product of $\d s$ and a $3$-measure on $\mathcal{H}_{s,u_0}$ (see \eqref{Splitting4Vol} and \eqref{Equiv3Vol}) a factor $1/\vert u \vert$ appears in front of the integral on the spacelike slices. Hence, what needs to be estimated by the energy is the integral over $\mathcal{H}_{s,u_0}$ of the error terms divided by $\vert u \vert$. For the fundamental estimates above, we did not use this $1/\vert u \vert$ factor but now it becomes crucial. We show here how the estimate is done for $2M \phi$, whose corresponding error term is $2M\phi \hat{\nabla}_K \phi $~:
\begin{align*} \int_{]-\infty , u_0 [ \times S^2} 2M\phi \hat{\nabla}_K \phi \frac{1}{\vert u \vert} \d u \d^2 \omega &= \int_{]-\infty , u_0 [ \times S^2} 2M\phi (u^2 \partial_u \phi -2 (1+uR) \partial_R \phi ) \frac{1}{\vert u \vert} \d u \d^2 \omega \\
&\lesssim \int_{]-\infty , u_0 [ \times S^2} \left( \phi^2 + u^2 (\partial_u \phi)^2 + \frac{1}{\vert u\vert} \vert \phi \vert \vert\partial_R \phi\vert \right) \d u \d^2 \omega \, .
\end{align*}
We now observe that
\[ \frac{1}{\vert u \vert } = \frac{1}{\sqrt{s} r_*} \frac{1}{\sqrt{\vert u \vert}} \simeq \frac{1}{\sqrt{s}} \sqrt{\frac{R}{\vert u \vert}} \, .\]
Hence,
\[ \int_{]-\infty , u_0 [ \times S^2} 2M\phi \hat{\nabla}_K \phi \frac{1}{\vert u \vert} \d u \d^2 \omega \lesssim \frac{1}{\sqrt{s}} \int_{]-\infty , u_0 [ \times S^2} \left( \phi^2 + u^2 (\partial_u \phi)^2 + \frac{R}{\vert u\vert} \left( \partial_R \phi \right)^2 \right)\d u \d^2 \omega \]
and the Poincaré estimate from \cite{MaNi2009} then gives
\begin{align*}
\int_{]-\infty , u_0 [ \times S^2} \hat{\nabla}_K \phi 2M\phi \frac{1}{\vert u \vert} \d u \d^2 \omega &\lesssim \frac{1}{\sqrt{s}} \int_{]-\infty , u_0 [ \times S^2} \left( u^2 (\partial_u \phi)^2 + \frac{R}{\vert u\vert} \left( \partial_R \phi \right)^2 \right)\d u \d^2 \omega \\
&\lesssim \frac{1}{\sqrt{s}} \mathcal{E}_{\mathcal{H}_{s,u_0}} (\phi ) \, .
\end{align*}
The term $2 (1-MR)^2 \partial_R \phi $ is treated similarly to obtain
\[ \int_{]-\infty , u_0 [ \times S^2} \hat{\nabla}_K \phi 2 (1-MR)^2 \partial_R \phi \frac{1}{\vert u \vert} \d u \d^2 \omega \lesssim \frac{1}{\sqrt{s}} \mathcal{E}_{\mathcal{H}_{s,u_0}} (\partial_R \phi ) \, . \]
The other error terms exhibit decay and can be controlled as was done for the fundamental estimate. Finally we obtain
\[ \int_{]-\infty ,u_0 [ \times S^2}  \nabla^a J_a (\partial_R \phi) \frac{1}{\vert u\vert} \d u \d^2 \omega \lesssim \frac{1}{\sqrt{s}} \left( \mathcal{E}_{\mathcal{H}_{s,u_0}} (\phi) + \mathcal{E}_{\mathcal{H}_{s,u_0}} (\partial_R \phi) \right) \]
and since the function $1/\sqrt{s}$ is integrable on $(0,1]$, we obtain estimates both ways for the sum of the energies of $\phi$ and $\partial_R \phi$ between $\mathcal{S}_{u_0} \cup \scri^+_{u_0}$ and $\mathcal{H}_{1,u_0}$. For more details, see \cite{MaNi2009} since everything goes through as it does in the Schwarzschild case.

As a consequence, all the theorems on the peeling for massless scalar fields at infinity on Schwarz\-schild's spacetime obtained in \cite{MaNi2009}, extend without modification to the extreme Reissner-Nordström geometry. We formulate one theorem that is in a sense the most complete result since it involves all directional derivatives, but the results available from \cite{MaNi2009} are more precise and detailed.
\begin{theorem} \label{ThmPeelingRNEInfinity}
Let $(u,R,\omega)$ be the outgoing Eddington-Finkelstein coordinates
of the extreme Reissner-Nordström exterior $\mathrm{Ext}$,
$u_0 \ll -1$, $k\in \mathbb{N}$ and $\phi$ a solution to the conformal wave equation~\eqref{CWERNE}. Then $\mathcal{E}_{\scri^+_{u_0}}(\partial^q_R\nabla^p_{S^2}\phi) + \mathcal{E}_{\mathcal{S}_{u_0}}(\partial^q_R\nabla^p_{S^2}\phi) < +\infty$ for all $p,q\in \mathbb{N}$, $p+q\leq k$ if and only if the initial data $(\phi_0, \phi_1)$ on the spacelike slice $\{t=0\}$ is chosen in the completion of $C^\infty_0([-u_0,+\infty[_{r_*}\times S^2)\times C^\infty_0([-u_0,+\infty[_{r_*}\times S^2)$ in the norm:
\begin{equation} \left\lVert\begin{pmatrix}\phi_0\\\phi_1\end{pmatrix}\right\rVert_k^2=\sum_{p+q\leq k} \mathcal{E}_{\mathcal{H}_{1,u_0}}\left(L^q\nabla^p_{S^2}\begin{pmatrix}\phi_0\\\phi_1\end{pmatrix} \right), \end{equation}
where we replace $\phi$ by $\phi_0$ and $\partial_t\phi$ by $\phi_1$ in the expression of the energy flux $\mathcal{E}_{\mathcal{H}_{1,u_0}}$ and $L$ is the operator defined by:
\begin{equation}
L =\begin{pmatrix} -\frac{r^2}{F(r)} \partial_{r_*} & -\frac{r^2}{F(r)} \\ -\frac{r^2}{F(r)}\partial_{r_*}^2- \Delta_{S^2} - \frac{2M}{r}\left(1-\frac{M}{r} \right) & -\frac{r^2}{F(r)} \partial_{r_*} \end{pmatrix} . \end{equation}
In this case we say that the solution $\phi$ peels at order $k$ at infinity.
\end{theorem}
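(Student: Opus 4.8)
The plan is to prove a two-way (order-by-order) energy equivalence and then rephrase the resulting family of slice energies as the announced Cauchy-data norm; the ``if'' and ``only if'' both come from the equivalence. The computations of the previous pages already furnish, through the fundamental identity \eqref{FEI} and Grönwall's lemma, a two-way control
\[ \mathcal{E}_{\scri^+_{u_0}}(\phi) + \mathcal{E}_{\mathcal{S}_{u_0}}(\phi) \simeq \mathcal{E}_{\mathcal{H}_{1,u_0}}(\phi) \]
for $\phi$ and for $\partial_R\phi$, hence also for $\nabla_{S^2}\phi$. The core of the argument is therefore an induction on the total order $k=p+q$. I would fix $\phi$ a solution of \eqref{CWERNE} with smooth compactly supported data, assume the equivalence of the three energies of $\partial_R^{q'}\nabla_{S^2}^{p'}\phi$ for all $p'+q'<k$, and establish it for $\partial_R^{q}\nabla_{S^2}^{p}\phi$ with $p+q=k$. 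All estimates are first proved for smooth compactly supported data and then extended by linearity and density to the completion in the relevant norm, as announced.

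The inductive step rests on commuting $\partial_R^{q}\nabla_{S^2}^{p}$ into \eqref{CWERNE} and writing an approximate conservation law for $J_a(\partial_R^{q}\nabla_{S^2}^{p}\phi)$ of the type \eqref{ApproxConsLawRNEDRphi}. Angular derivatives are harmless: $\nabla_{S^2}$ commutes with $\partial_u$ and $\partial_R$ and, up to zeroth-order curvature terms on $S^2$ that carry the same decay, with $\Delta_{S^2}$, so the corresponding commutators only produce terms controlled at the same order. The genuine difficulty is the commutation of $\partial_R$, already visible at first order in \eqref{RNEEqDRphi}: the terms $2(1-MR)^2\partial_R\phi$ and $2M\phi$ have no decay at $\scri^+$ and their contribution to $\hat{\nabla}^aJ_a$, once integrated on $\mathcal{H}_{s,u_0}$, cannot be absorbed directly. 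The resolution, which I would reproduce at order $k$, is the geometric gain of the foliation used just before the theorem: the $3$-measure on $\mathcal{H}_{s,u_0}$ carries a factor $1/|u|$ and on these leaves $1/|u|\simeq s^{-1/2}\sqrt{R/|u|}$, so that after a Poincaré estimate from \cite{MaNi2009} each dangerous term is bounded by $s^{-1/2}\big(\mathcal{E}_{\mathcal{H}_{s,u_0}}(\partial_R^{q}\nabla_{S^2}^{p}\phi)+\mathcal{E}_{\mathcal{H}_{s,u_0}}(\partial_R^{q-1}\nabla_{S^2}^{p}\phi)\big)$; since $s^{-1/2}$ is integrable on $(0,1]$, Grönwall's lemma still closes and delivers the order-$k$ equivalence, the lower-order energy being finite by the induction hypothesis.

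It then remains to identify $\sum_{p+q\le k}\mathcal{E}_{\mathcal{H}_{1,u_0}}(\partial_R^{q}\nabla_{S^2}^{p}\phi)$ with $\|(\phi_0,\phi_1)\|_k^2$. On $\mathcal{H}_{1,u_0}=\{t=0\}$ one has $\partial_u=\partial_t$ and, from $R=1/r$, $u=t-r_*$, the transverse relation $\partial_R=-\frac{r^2}{F}(\partial_t+\partial_{r_*})$, so that $\partial_u\phi$, $\partial_R\phi$ and $\nabla_{S^2}\phi$ restricted to the slice are expressed through $\phi_0$, $\phi_1$ and their tangential derivatives. Using \eqref{CWERNE} to trade $\partial_t^2\phi$ for spatial operators, a direct computation shows that the pair $(\partial_R\phi,\partial_t\partial_R\phi)|_{t=0}$ is exactly $L(\phi_0,\phi_1)$, i.e. $L$ is the Cauchy-data map of $\partial_R$ along solutions. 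Iterating, the Cauchy data of $\partial_R^{q}\nabla_{S^2}^{p}\phi$ equals $L^{q}\nabla_{S^2}^{p}(\phi_0,\phi_1)$ modulo terms with strictly fewer $\partial_R$'s, which come from the commutators $[\square_{\hat{g}},\partial_R^{\,j}]$ that prevent $\partial_R^{q-1}\phi$ from solving \eqref{CWERNE}; the change of family is thus triangular in $q$ with bounded coefficients, hence a norm equivalence, giving $\sum_{p+q\le k}\mathcal{E}_{\mathcal{H}_{1,u_0}}(\partial_R^{q}\nabla_{S^2}^{p}\phi)\simeq\|(\phi_0,\phi_1)\|_k^2$.

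Combining the two equivalences yields the stated characterisation, and the passage from an arbitrarily small neighbourhood of $i^0$ to a full neighbourhood of $\scri^+$ (and to all of $\widehat{\Ext}$) is provided by the standard hyperbolic propagation results of Friedrich \cite{HFri2004} recalled at the start of the section. I expect the main obstacle to be exactly the order-$k$ control of the non-decaying error terms generated by commuting $\partial_R$: one must check that every such term, after division by $|u|$, lands in the $s^{-1/2}$-integrable regime and is dominated by energies of order $\le k$, so that the triangular/Poincaré bookkeeping never forces a term needing $\partial_R^{q+1}\phi$ or a non-integrable power of $s$. Beyond this point everything is, as the authors emphasise, identical to the Schwarzschild treatment of \cite{MaNi2009}.
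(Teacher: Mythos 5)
Your proposal follows essentially the same route as the paper's proof: the Morawetz-current energy identity \eqref{FEI} on the $\mathcal{H}_{s,u_0}$ foliation near $i^0$, closure via Grönwall using the Poincaré estimate and the geometric gain $1/\vert u\vert \simeq s^{-1/2}\sqrt{R/\vert u\vert}$ to absorb the non-decaying terms produced by commuting $\partial_R$, the identification of $L$ as the Cauchy-data map of $\partial_R$ along solutions (exactly as in Remark~\ref{RkOpL}), a density argument, and Friedrich's propagation result away from $i^0$. The only difference is one of presentation: you spell out the order-$k$ induction and the triangular bookkeeping between $\{\partial_R^q\nabla_{S^2}^p\phi\}$ and $\{L^q\nabla_{S^2}^p(\phi_0,\phi_1)\}$ slightly more explicitly, whereas the paper defers precisely these details to \cite{MaNi2009}.
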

\begin{remark}
This means that for the wave equation on the extreme Reissner-Nordström spacetime, the classes of physical initial data at $t=0$ that ensure given degrees of regularity of the rescaled field at $\scri^+$, are characterised by exactly the same regularity and decay properties as on Minkowski spacetime.
\end{remark}
\begin{remark} \label{RkOpL}
The expression of the operator $L$ in Theorem~\ref{ThmPeelingRNEInfinity} is determined from the action of
\[ \partial_R=-\frac{1}{R^2(1-MR)^2}(\partial_{r_*}+\partial_t)=-\frac{r^4}{(r-M)^2}(\partial_{r_*}+\partial_t) = -\frac{r^2}{F(r)} (\partial_{r_*}+\partial_t)\]
on the vector ${}^t\begin{pmatrix}
\phi&\partial_t\phi
\end{pmatrix}$,
\[ \partial_R \begin{pmatrix} \phi \\ \partial_t \phi \end{pmatrix} = \begin{pmatrix} -\frac{r^2}{F(r)} \partial_{r_*} & -\frac{r^2}{F(r)} \\ -\frac{r^2}{F(r)} \partial_t^2 & -\frac{r^2}{F(r)} \partial_{r_*} \end{pmatrix} \begin{pmatrix} \phi \\ \partial_t \phi \end{pmatrix}  \, ,\]
where $\phi$ satisfies Equation~\eqref{CWERNE}. Using the aforementioned equation, the term $\partial^2_t \phi$ in $\partial_R\partial_t \phi$ can be written in terms of an operator involving only spatial derivatives acting on $\phi$.
Indeed, it is readily determined from Equation~\eqref{dAlembertianRNE} that:
\begin{equation}\Box_{\hat{g}}= \frac{r^2}{F(r)}(\partial^2_t-\partial_{r_*}^2) -\Delta_{S^2}. \end{equation}
From which it can be seen that for a solution of~\eqref{CWERNE}:
\[-\frac{r^2}{F(r)} \partial_t^2\phi = -\frac{r^2}{F(r)}\partial_{r_*}^2\phi - \Delta_{S^2}\phi - \frac{2M}{r}\left(1-\frac{M}{r} \right)\phi. \]
This leads to the expression for $L$.
\end{remark}

\subsection{Peeling at the horizon for the extreme Reissner-Nordström metric}\label{PeelingAtHorizonRNE}
The peeling result now being established at infinity, we can study how it pulls back via the Couch-Torrence inversion $\Phi$. Recall from Theorem~\ref{ThmIsomCTModCT} that $\Phi$ is an isometric involution of $(\widehat{\Ext},\hat{g})$. In particular, the pullback via $\Phi$ commutes with $\Box_{\hat{g}}$, and (Corollary~\ref{ConfWEqInvarianceCT}) if $\phi$ solves~\eqref{CWERNE} then so does $\Phi^*\phi$. Moreover, whilst it preserves time-orientation, $\Phi$ reverses the overall orientation of spacetime. Hence, for any form $\alpha$ and any $4$-form $\omega$,
\begin{equation}\label{PullbackHodgeCT} \Phi^*(\star \alpha) = -\star (\Phi^*\alpha)\, , \qquad \int \Phi^*\omega=-\int \omega .\end{equation}

With this in mind, we will now translate the quantities we considered in a neighbourhood of $i^0$ to quantities in a neighbourhood of the internal infinity $i^1$ in the past of the degenerate horizon $\scrh^+$. To make this clearer,  we can work in two charts simultaneously, using ingoing coordinates $(v,R,\omega)$ for image points and outgoing coordinates $(u,R,\omega)$ for starting points. Writing $(v',R',\omega')$ for the ingoing coordinates of the image of $\Phi(p)$ where $p=(u,R,\omega)$, we have:
\[v'=u, R'= \frac{1-RM}{M}, \omega'=\omega.\]
From this we can see that $\Phi$ maps a neighbourhood $\Omega_{u_0}$ of $i^0$ to the neighbourhood of $i^1$ given by:
\[\tilde{\Omega}_{u_0} = \{t\geq 0\}\cap\{v < u_0\}.\]
Naturality of the exterior derivative combined with the fact that $\Phi$ is an isometry of $\hat{g}$, give
\begin{equation}
\Phi^*T_{ab}(\phi)=T_{ab}(\Phi^*\phi).
\end{equation}
Pullback and contractions also commute so that:
\begin{equation}
\Phi^*{J}_b=\Phi^*(K^aT_{ab}(\phi))=\Phi^*K^a\Phi^*T_{ab}(\phi)=\Phi^*K^aT_{ab}(\Phi^*\phi).
\end{equation}
Putting everything together with~\eqref{PullbackHodgeCT} it follows that if we define:
\begin{equation}
\tilde{J}(\phi)=(\Phi^*K^a)T_{ab}(\phi),
\end{equation}
with
\begin{equation} \label{PullBackMorawetzCouchTorrence}
(\Phi^*K^a) \frac{\partial}{\partial x^a} = v^2\partial_v + 2\left(1+v\frac{1-MR}{R}\right)\partial_R =: \tilde{K}^a \frac{\partial}{\partial x^a} \, ,
\end{equation}
the Couch-Torrence inversion relates energy fluxes of $J(\phi)$ to those of $\tilde{J}(\Phi^* \phi)$. More precisely, let
\[ \scrh^+_{u_0}= \scrh^+\cap\{v < u_0\}, ~ \tilde{\mathcal{H}}_{1,u_0}=\{t=0\}\cap\{v < u_0\}, ~ \tilde{\mathcal{S}}_{u_0} = \{ t \geq 0 \} \cap \{ v=u_0\},\]
and put for a given hypersurface $\mathcal{S}$
\begin{equation}
\tilde{\mathcal{E}}_{\mathcal{S}}(\Phi^*\phi) :=  \int_{\mathcal{S}} \star \tilde{J}(\Phi^*\phi) ,
\end{equation}
then:
\begin{equation}
\tilde{\mathcal{E}}_{\scrh^+_{u_0}}(\Phi^*\phi)=\mathcal{E}_{\scri^+_{u_0}}(\phi), ~  \tilde{\mathcal{E}}_{\tilde{\mathcal{H}}_{1,u_0}}(\Phi^*\phi)= \mathcal{E}_{\mathcal{H}_{1,u_0}}(\phi), ~  \tilde{\mathcal{E}}_{\tilde{\mathcal{S}}_{u_0}}(\Phi^*\phi)= \mathcal{E}_{\mathcal{S}_{u_0}}(\phi).
\end{equation}
Finally the dictionary is completed by:
\begin{equation}
\Phi^*(\partial_{R}\phi)=(\Phi^*\partial_{R})(\Phi^*\phi)=-\partial_{R'}(\Phi^*\phi).
\end{equation}
We obtain the following by applying Theorem~\ref{ThmPeelingRNEInfinity} to $\Phi^*\phi$.
\begin{theorem} \label{ThmPeelingHorizonERN}
Let $(v,R,\omega)$ be the ingoing Eddington-Finkelstein coordinates of the extreme Reissner-Nordström exterior $\mathrm{Ext}$, $v_0 \ll -1$, $k\in \mathbb{N}$ and $\phi$ a solution to the conformal wave equation~\eqref{CWERNE}.
Then $\tilde{\mathcal{E}}_{\mathscr{H}^+_{v_0}}(\partial^q_R\nabla^p_{S^2}\phi) + \tilde{\mathcal{E}}_{\tilde{\mathcal{S}}_{v_0}}(\partial^q_R\nabla^p_{S^2}\phi) < +\infty$ for all $p,q\in \mathbb{N}$, $p+q\leq k$ if and only if the initial data $(\phi_0, \phi_1)$ on the spacelike slice $\{t=0\}$ is chosen in the completion of $C^\infty_0(]-\infty,v_0]_{r_*}\times S^2)\times C^\infty_0(]-\infty,v_0]_{r_*}\times S^2)$ in the norm:
\begin{equation} \left\lVert\begin{pmatrix}\phi_0\\\phi_1\end{pmatrix}\right\rVert_k^2=\sum_{p+q\leq k} \tilde{\mathcal{E}}_{\tilde{\mathcal{H}}_{1,v_0}}\left(\tilde{L}^q\nabla^p_{S^2}\begin{pmatrix}\phi_0\\\phi_1\end{pmatrix} \right), \end{equation}
where we replace $\phi$ by $\phi_0$ and $\partial_t\phi$ by $\phi_1$ in the expression of
the energy flux $\tilde{\mathcal{E}}_{\tilde{\mathcal{H}}_{1,v_0}}$ and $\tilde{L}$ is the operator defined by:
\begin{equation} \label{OpLTilde}
\tilde{L} =\begin{pmatrix} -\frac{r^2}{F(r)} \partial_{r_*} & \frac{r^2}{F(r)} \\ \frac{r^2}{F(r)}\partial_{r_*}^2+ \Delta_{S^2} + \frac{2M}{r}\left(1-\frac{M}{r} \right) & -\frac{r^2}{F(r)} \partial_{r_*} \end{pmatrix} . \end{equation}
In this case we say that the solution $\phi$ peels at order $k$ at the degenerate horizon $\scrh^+$.
\end{theorem}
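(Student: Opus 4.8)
The plan is to deduce Theorem~\ref{ThmPeelingHorizonERN} from the peeling-at-infinity result by applying Theorem~\ref{ThmPeelingRNEInfinity} to the pulled-back field $\Phi^*\phi$ and then transporting every object across the Couch-Torrence inversion by means of the dictionary assembled above. Throughout I set $v_0=u_0\ll-1$. Since $\Phi$ is an isometric involution of $(\widehat{\Ext},\hat g)$ (Theorem~\ref{ThmIsomCTModCT}) that preserves the conformal d'Alembertian (Corollary~\ref{ConfWEqInvarianceCT}), the field $\Phi^*\phi$ is again a solution of~\eqref{CWERNE}; moreover $\Phi$ maps the neighbourhood $\Omega_{v_0}$ of $i^0$ onto the neighbourhood $\tilde\Omega_{v_0}$ of $i^1$, exchanging $\scri^+_{v_0}\leftrightarrow\scrh^+_{v_0}$, $\mathcal S_{v_0}\leftrightarrow\tilde{\mathcal S}_{v_0}$ and $\mathcal H_{1,v_0}\leftrightarrow\tilde{\mathcal H}_{1,v_0}$. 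Applying Theorem~\ref{ThmPeelingRNEInfinity} to $\Phi^*\phi$ then characterises the finiteness of the fluxes of $\partial_R^q\nabla^p_{S^2}(\Phi^*\phi)$ on $\scri^+_{v_0}$ and $\mathcal S_{v_0}$ in terms of the $\|\cdot\|_k$-norm of the Cauchy data of $\Phi^*\phi$, and the whole task reduces to rewriting both sides intrinsically in terms of $\phi$.

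I would treat the flux side first, commuting the derivatives through $\Phi^*$. The dictionary gives $\Phi^*(\partial_R\,\cdot)=-\partial_R(\Phi^*\,\cdot)$ (the sign stemming from $R'=\tfrac{1-RM}{M}$, i.e.\ $\partial_R=-\partial_{R'}$, as recorded above), while $\nabla_{S^2}$ commutes with $\Phi^*$ because $\Phi$ fixes $\omega$; hence $\partial_R^q\nabla^p_{S^2}(\Phi^*\phi)=(-1)^q\,\Phi^*\bigl(\partial_R^q\nabla^p_{S^2}\phi\bigr)$. The point to emphasise is that the flux identities $\tilde{\mathcal E}_{\scrh^+_{v_0}}(\Phi^*\chi)=\mathcal E_{\scri^+_{v_0}}(\chi)$ and their companions were derived only from $\Phi^*T_{ab}(\chi)=T_{ab}(\Phi^*\chi)$, from the orientation reversal~\eqref{PullbackHodgeCT} and from the pullback of $K$, none of which uses the field equation; they therefore hold for an \emph{arbitrary} function $\chi$. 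Using the involutivity of $\Phi$, together with the fact that the fluxes are quadratic (so the factor $(-1)^q$ is immaterial), I obtain $\mathcal E_{\scri^+_{v_0}}\bigl(\partial_R^q\nabla^p_{S^2}(\Phi^*\phi)\bigr)=\tilde{\mathcal E}_{\scrh^+_{v_0}}\bigl(\partial_R^q\nabla^p_{S^2}\phi\bigr)$, and likewise with $\mathcal S_{v_0}$ and $\tilde{\mathcal S}_{v_0}$. The two flux conditions are thus equivalent.

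For the data side, $\Phi$ fixes $t$ and restricts on $\{t=0\}$ to the spatial inversion $\iota\colon r_*\mapsto-r_*$; since $\Phi$ is $t$-independent in the coordinates $(t,r,\omega)$, the operator $\partial_t$ commutes with $\Phi^*$ and the Cauchy data of $\Phi^*\phi$ is exactly $\iota^*(\phi_0,\phi_1)$, while $\iota$ carries $C^\infty_0(]-\infty,v_0]_{r_*}\times S^2)$ onto $C^\infty_0([-v_0,+\infty[_{r_*}\times S^2)$. The crux is to show that the data operator $L$ conjugates to the operator $\tilde L$ of~\eqref{OpLTilde} under $\iota$, precisely $\iota^*L\,\iota^*=-\tilde L$, so that $L^q\nabla^p_{S^2}\iota^*=(-1)^q\,\iota^*\tilde L^q\nabla^p_{S^2}$. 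This rests on two facts visible from Remark~\ref{RkOpL}: that $\iota^*$ reverses $\partial_{r_*}$ while fixing $\Delta_{S^2}$, and that the coefficients $r^2/F(r)$ and $\tfrac{2M}{r}(1-\tfrac{M}{r})$ are \emph{even} in $r_*$, i.e.\ invariant under $r\mapsto rM/(r-M)$ --- which is the very statement $r_*\mapsto-r_*$. Geometrically this is the same sign change that turns $\partial_R=-\tfrac{r^2}{F}(\partial_{r_*}+\partial_t)$ in the outgoing chart into $\partial_R=-\tfrac{r^2}{F}(\partial_{r_*}-\partial_t)$ in the ingoing chart. Combining $\iota^*L\,\iota^*=-\tilde L$ with the Cauchy-slice flux identity $\mathcal E_{\mathcal H_{1,v_0}}(\iota^*\chi)=\tilde{\mathcal E}_{\tilde{\mathcal H}_{1,v_0}}(\chi)$ (again quadratic and valid for arbitrary $\chi$, obtained exactly as the fluxes at $\scri^+$ and $\scrh^+$) yields the norm identity $\|\iota^*(\phi_0,\phi_1)\|_k=\|(\phi_0,\phi_1)\|_k$, the latter computed with $\tilde L$ and $\tilde{\mathcal E}_{\tilde{\mathcal H}_{1,v_0}}$. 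Hence $\iota^*$ is an isometric isomorphism of the two data spaces that extends to their completions.

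Chaining the three equivalences --- the flux condition for $\phi$ at $\scrh^+$ is equivalent to the flux condition for $\Phi^*\phi$ at $\scri^+$, which by Theorem~\ref{ThmPeelingRNEInfinity} is equivalent to membership of the data of $\Phi^*\phi$ in the infinity completion, which in turn is equivalent to membership of the data of $\phi$ in the horizon completion --- establishes the theorem. I expect the only genuinely computational step, and hence the main obstacle, to be the identity $\iota^*L\,\iota^*=-\tilde L$: one must check the parity in $r_*$ of every coefficient entering $L$ so that conjugation by $\iota$ reproduces exactly the sign pattern of~\eqref{OpLTilde}. Everything else is bookkeeping already prepared by Theorem~\ref{ThmIsomCTModCT}, Corollary~\ref{ConfWEqInvarianceCT} and the dictionary, once one notes that all the flux functionals are quadratic and that the dictionary identities never invoked the field equation.
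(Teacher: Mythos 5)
Your proposal is correct and follows essentially the same route as the paper: the paper's proof is precisely to apply Theorem~\ref{ThmPeelingRNEInfinity} to $\Phi^*\phi$ and transport fluxes and Cauchy data through the Couch-Torrence dictionary ($\Phi^*T_{ab}(\phi)=T_{ab}(\Phi^*\phi)$, the orientation-reversal identity~\eqref{PullbackHodgeCT}, $\Phi^*(\partial_R\phi)=-\partial_{R'}(\Phi^*\phi)$, and the flux equalities $\tilde{\mathcal{E}}_{\scrh^+_{u_0}}(\Phi^*\phi)=\mathcal{E}_{\scri^+_{u_0}}(\phi)$, etc.). Your write-up in fact makes explicit two points the paper leaves implicit --- that the flux identities hold for arbitrary functions since they never invoke the field equation, and the conjugation identity $\iota^*L\,\iota^*=-\tilde L$ (whose sign is harmless because the fluxes are quadratic), which the paper only encodes in the remark deriving $\tilde L$ from $\partial_R$ in ingoing coordinates.
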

\begin{remark}
Note that, due to the spherical symmetry, one can control the purely radial regularity at the horizon as shown in Theorem 4 in \cite{MaNi2009}.
\end{remark}
\begin{remark}
The operator $\tilde{L}$ is obtained exactly as $L$ in Remark \ref{RkOpL}, but in this case we work with ingoing coordinates $(v,R,\omega)$, whence
\[ \partial_R = \frac{r^2}{F(r)} \left( \partial_t - \partial_{r_*} \right) \, .\]
Consequently
\[ \partial_R \begin{pmatrix} \phi \\ \partial_t \phi \end{pmatrix} = \begin{pmatrix} -\frac{r^2}{F(r)} \partial_{r_*} & \frac{r^2}{F(r)} \\ \frac{r^2}{F(r)} \partial_t^2 & -\frac{r^2}{F(r)} \partial_{r_*} \end{pmatrix} \begin{pmatrix} \phi \\ \partial_t \phi \end{pmatrix}  \]
and if $\phi$ satisfies Equation \eqref{CWERNE} then the action of $\partial_R$ on the vector ${}^t\begin{pmatrix} \phi & \partial_t \phi \end{pmatrix}$ is given by \eqref{OpLTilde}.
\end{remark}
\subsection{Peeling for massive fields at the horizon of the extreme Reissner-Nordström metric}\label{PeelingKleinGordonRNE}
The peeling established directly at the horizon is proved for the rescaled metric $\hat{g} = R^2 g$ but contrary to what happens at infinity, $R \rightarrow 1/M >0$ at the horizon, hence the conformal rescaling is not singular there. This means that we can deal with massive fields. Indeed, using the transformation law for the d'Alembertian under a conformal rescaling
\[ \square_g + \frac16 \mathrm{Scal}_g = \Omega^3 \left( \square_{\hat{g}} + \frac16 \mathrm{Scal}_{\hat{g}} \right) \Omega^{-1} \, ,\]
we obtain that for any $m\geq0$ ,
\[ \square_g  + \frac16 \mathrm{Scal}_g  + m^2 = \Omega^3 \left( \square_{\hat{g}} + \frac16 \mathrm{Scal}_{\hat{g}} + \Omega^{-2} m^2 \right) \Omega^{-1} \, . \]
Hence, for any distribution $\psi$ on the exterior of the black hole, putting $\phi = \Omega^{-1} \psi$,
\[ \left( \square_g  + \frac16 \mathrm{Scal}_g  + m^2 \right) \psi = \Omega^3 \left( \square_{\hat{g}} + \frac16 \mathrm{Scal}_{\hat{g}} + \Omega^{-2} m^2 \right) \phi \, .\]
\begin{remark}
In the conformal density formalism of Section~\ref{IntroConformaldAlembertian}, this amounts to studying the equation:
\[ \Box_c \underline{\phi} +m^2\sigma_g^{-2}\underline{\phi}=0, \quad \underline{\phi}\in\mathcal{E}[-1]. \]
Where $g$ is the \emph{physical} metric and $\sigma_g$ the canonical conformal $1$-density it determines. Since if $\hat{g}=\Omega^2 g$, 
$\sigma_{\hat{g}}=\Omega^{-1}\sigma_g$ when writing this in a scale this leads to the above equations.
\end{remark}
With the conformal factor $\Omega = R$, $\psi$ is a solution to
\begin{equation} \label{WEqPo}
\left( \square_g  + \frac16 \mathrm{Scal}_g  + m^2 \right) \psi =0
\end{equation}
if and only if $\phi = R^{-1} \psi$ satisfies
\begin{equation} \label{RescWEqPot}
\left( \square_{\hat{g}} + \frac16 \mathrm{Scal}_{\hat{g}} + R^{-2} m^2 \right) \phi =0 \, .
\end{equation}
Moreover, $\Phi^* \phi$ being a solution to \eqref{RescWEqPot} is equivalent to $ \phi$ satisfying the equation
\begin{equation} \label{RescWeqPotCT}
\left( \square_{\hat{g}} + \frac16 \mathrm{Scal}_{\hat{g}} + \left( \frac{Mm}{1-MR} \right)^{2} \right) \phi =0 \, ,
\end{equation}
since under the Couch-Torrence inversion, $1/R=r$ is transformed into
\[ \frac{rM}{r-M} = \frac{M}{1-MR} \, .\]
We can study the peeling at infinity and then translate the results, using the Couch-Torrence inversion as a dictionary, at the horizon. The coefficient $(Mm/(1-MR))^2$ tends to $(Mm)^2$ as $r\rightarrow \infty$ and is therefore bounded in the neighbourhood of infinity, however it does not decay at infinity. Using the same stress-energy tensor and observer as for the conformal wave equation to define the energy current $J$, we obtain the following approximate conservation law for $\phi$ solution to  \eqref{RescWeqPotCT}
\begin{align*}
\hat\nabla^a J_a =&\, \nabla^{(a} K^{b)} T_{ab} - \frac{1}{6} \mathrm{Scal}_{\hat{g}} \phi \nabla_{K} \phi - m^2 \phi \nabla_{K} \phi \\
=&  -2 \, {\left(R^{2} M^{2} - R M\right)} u^{2} \phi \frac{\partial\,\phi}{\partial u} + 4 \, {\left(R^{2} M^{2} - R M + {\left(R^{3} M^{2} - R^{2} M\right)} u\right)} \phi \frac{\partial\,\phi}{\partial R} \\
& - 2 \, {\left(2 \, R^{3} M^{2} - 3 \, R^{2} M + {\left(R^{4} M^{2} - R^{3} M\right)} u\right)} \left( \frac{\partial\,\phi}{\partial R}\right)^{2} \\
& - \left( \frac{Mm}{1-MR} \right)^{2} u^2 \phi \frac{\partial \phi}{\partial u} + \left( \frac{Mm}{1-MR} \right)^{2} 2 (1+uR) \phi \frac{\partial \phi}{\partial R} \, .
\end{align*}
The third line in the equation above has no decay and is therefore harder to control than the other two, however, the factor $1/\vert u \vert$ that we gain in the splitting of the $4$-volume measure allows to control this term exactly as we did for the first order estimate in the massless case. The rest of the proof goes through without modifications. We obtain the following result:
\begin{theorem} \label{PeelingHorizonRNE}
Theorems \ref{ThmPeelingRNEInfinity} and \ref{ThmPeelingHorizonERN} are valid respectively for solutions to Equations \eqref{RescWeqPotCT} and \eqref{RescWEqPot} with the operators $L$ and $\tilde{L}$ now given by
\begin{eqnarray*}
L &=&\begin{pmatrix} -\frac{r^2}{F(r)} \partial_{r_*} & -\frac{r^2}{F(r)} \\ -\frac{r^2}{F(r)}\partial_{r_*}^2- \Delta_{S^2} - \frac{2M}{r}\left(1-\frac{M}{r} \right) + \left( \frac{Mmr}{r-M}\right)^2& -\frac{r^2}{F(r)} \partial_{r_*} \end{pmatrix} , \\
\tilde{L} &=&\begin{pmatrix} -\frac{r^2}{F(r)} \partial_{r_*} & \frac{r^2}{F(r)} \\ \frac{r^2}{F(r)}\partial_{r_*}^2+ \Delta_{S^2} + \frac{2M}{r}\left(1-\frac{M}{r} \right) -r^2m^2& -\frac{r^2}{F(r)} \partial_{r_*} \end{pmatrix} .
\end{eqnarray*}
\end{theorem}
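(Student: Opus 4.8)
The plan is to run, for the massive equation, exactly the analysis of Section~\ref{PeelingRNE}; the only genuinely new ingredient is the control of the mass-induced error terms. The key structural observation is that the peeling \emph{at infinity} must be established for Equation~\eqref{RescWeqPotCT}, whose potential $(Mm/(1-MR))^2$ tends to $(Mm)^2$ as $r\to\infty$ and is therefore bounded near $\scri^+$; the physical rescaled equation~\eqref{RescWEqPot} is unsuitable for a direct treatment at infinity because its potential $R^{-2}m^2 = r^2m^2$ blows up there. I would keep the Morawetz field $K$ of~\eqref{Morawetz}, the stress-energy tensor $T_{ab}(\phi)$, and the current $J_a(\phi)=K^bT_{ab}(\phi)$ of~\eqref{EnCurrentRNE} unchanged, so that the energy fluxes across $\scri^+_{u_0}$, $\mathcal{S}_{u_0}$, and $\mathcal{H}_{s,u_0}$ are given by the same formulas as in the massless case; in particular Proposition~\ref{EnEqFormRNE} and Lemma~\ref{ApproxCloseI0} remain available verbatim.

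First I would record the approximate conservation law displayed above: for $\phi$ solving~\eqref{RescWeqPotCT}, $\hat{\nabla}^aJ_a$ differs from its massless value by the two mass terms $-(Mm/(1-MR))^2u^2\phi\,\partial_u\phi$ and $(Mm/(1-MR))^2\,2(1+uR)\phi\,\partial_R\phi$. Neither decays at infinity, so neither is absorbed by the crude bound~\eqref{LocEstErrTermsRNE} used for the fundamental estimate. The resolution is the geometric gain of a factor $1/\vert u\vert$ coming from the splitting~\eqref{Splitting4Vol}--\eqref{Equiv3Vol} of the $4$-volume measure over the foliation $\{\mathcal{H}_{s,u_0}\}$. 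For the first term, the bounded coefficient gives $\tfrac{1}{\vert u\vert}u^2\phi\,\partial_u\phi \lesssim \vert u\vert\,\phi\,\partial_u\phi \lesssim \phi^2 + u^2(\partial_u\phi)^2$, which is controlled by $\mathcal{E}_{\mathcal{H}_{s,u_0}}(\phi)$ after the Poincaré estimate of~\cite{MaNi2009}. For the second term I would reproduce the argument already used for the non-decaying term $2M\phi$ in the massless estimate for $\partial_R\phi$: writing $\tfrac{1}{\vert u\vert}\simeq\tfrac{1}{\sqrt{s}}\sqrt{R/\vert u\vert}$ and using that $1+uR$ is bounded on $\Omega_{u_0}$ (Lemma~\ref{ApproxCloseI0}), one obtains a bound $\lesssim\tfrac{1}{\sqrt{s}}\,\mathcal{E}_{\mathcal{H}_{s,u_0}}(\phi)$; since $1/\sqrt{s}$ is integrable on $(0,1]$, this suffices for the Grönwall argument and yields estimates both ways between $\mathcal{E}_{\scri^+_{u_0}}(\phi)+\mathcal{E}_{\mathcal{S}_{u_0}}(\phi)$ and $\mathcal{E}_{\mathcal{H}_{1,u_0}}(\phi)$.

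For the higher-order estimates I would commute $\partial_R$ and $\nabla_{S^2}$ into~\eqref{RescWeqPotCT}. The commutator $[\square_{\hat{g}},\partial_R]$ is exactly as in the massless case, and the potential contributes to the equation for $\partial_R\phi$ the two \emph{bounded, non-decaying} terms $-(Mm/(1-MR))^2\partial_R\phi$ and $-\partial_R\!\left((Mm/(1-MR))^2\right)\phi = -2M^3m^2(1-MR)^{-3}\phi$; both share the structure of the terms $2(1-MR)^2\partial_R\phi$ and $2M\phi$ treated in the massless case and are controlled by the identical $1/\sqrt{s}$ mechanism, while angular derivatives commute cleanly with the $R$-dependent potential. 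Hence all estimates of Section~\ref{PeelingRNE} go through and Theorem~\ref{ThmPeelingRNEInfinity} holds for~\eqref{RescWeqPotCT}; the only change in its statement is in $L$, whose lower-left entry acquires the summand $(Mmr/(r-M))^2$. This follows by repeating Remark~\ref{RkOpL}: from~\eqref{dAlembertianRNE} one has $\square_{\hat{g}}=\tfrac{r^2}{F(r)}(\partial_t^2-\partial_{r_*}^2)-\Delta_{S^2}$, so for a solution of~\eqref{RescWeqPotCT} the entry $-\tfrac{r^2}{F(r)}\partial_t^2\phi$ equals $-\tfrac{r^2}{F(r)}\partial_{r_*}^2\phi-\Delta_{S^2}\phi-\tfrac{2M}{r}(1-\tfrac{M}{r})\phi+(Mmr/(r-M))^2\phi$.

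Finally I would translate to the horizon through the Couch-Torrence inversion. By Corollary~\ref{ConfWEqInvarianceCT} and the computation preceding~\eqref{RescWeqPotCT}, $\Phi^*\phi$ solves~\eqref{RescWEqPot} if and only if $\phi$ solves~\eqref{RescWeqPotCT}; since $\Phi$ is an isometry of $(\widehat{\Ext},\hat{g})$ (Theorem~\ref{ThmIsomCTModCT}), the energy identifications $\tilde{\mathcal{E}}_{\scrh^+_{u_0}}(\Phi^*\phi)=\mathcal{E}_{\scri^+_{u_0}}(\phi)$ and their companions established in Section~\ref{PeelingAtHorizonRNE} remain valid unchanged. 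Applying Theorem~\ref{ThmPeelingRNEInfinity} for~\eqref{RescWeqPotCT} to $\Phi^*\phi$ then gives Theorem~\ref{ThmPeelingHorizonERN} for solutions of~\eqref{RescWEqPot}, with $\tilde{L}$ obtained as in the corresponding massless remark but in ingoing coordinates, its lower-left entry now carrying the summand $-r^2m^2$. The main obstacle is precisely the pair of non-decaying mass error terms; but, as explained, the $1/\vert u\vert$ gain of the foliation together with the integrability of $1/\sqrt{s}$ defuses it exactly as in the massless step, so no genuinely new difficulty arises.
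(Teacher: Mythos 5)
Your proposal is correct and follows essentially the same route as the paper: establish peeling at infinity for Equation~\eqref{RescWeqPotCT} with the unchanged Morawetz current, absorb the bounded but non-decaying mass error terms through the $1/\vert u\vert$ gain of the volume splitting (the $1/\sqrt{s}$ mechanism already used for the non-decaying terms in the massless first-order estimate), and then transport the result to the horizon for Equation~\eqref{RescWEqPot} via the Couch-Torrence dictionary, reading off the modified entries of $L$ and $\tilde{L}$ exactly as in Remark~\ref{RkOpL}. The only difference is cosmetic: you treat the $u^2\phi\,\partial_u\phi$ mass term directly by Poincar\'e and reserve the $1/\sqrt{s}$ argument for the $\partial_R\phi$ term, whereas the paper handles both by the same mechanism, and you spell out the first-order commutation explicitly where the paper simply asserts that the rest of the proof goes through unchanged.
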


\section{The Couch-Torrence inversion for a spherically symmetric degenerate horizon and peeling} \label{LooseEndCT}

The general form of the metric near a future degenerate horizon $\mathscr{H}^+$can be found in \cite{KuLu2013}, expressed in Gaussian null coordinates, with a horizon located at $r=M$
\begin{equation} \label{DegHorMet}
g = -2 \d v \left( \d r + (r-M) h_a (r,x) \d x^a - \frac12 (r-M)^2 \Gamma(r,x) \d v \right) + \gamma_{ab} (r,x) \d x^a \d x^b \, .
\end{equation}
In this case, $M$ has no physical significance other than the value of a coordinate. We study a simple case for which the metric is spherically symmetric:
\begin{equation} \label{SphSymDegHorMet}
g = -2 \d v \left( \d r - \frac12 (r-M)^2 \Gamma(r) \d v \right) - r^2 \d \omega^2 \, ,
\end{equation}
where the function $\Gamma$ is positive, analytic and does not vanish at $r=M$. In the case of extreme Reissner-Nordström, we have
\[ \Gamma(r) = \frac{1}{r^2} \rightarrow \frac{1}{M^2} \mbox{ at the horizon.}\]

\subsection{The image asymptotic region and its conformal compactification}

Let us introduce a manifold, to be made more precise later, that we shall denote by $\mathscr{S}$ with coordinates $(u,z,\omega)$ ($\omega \in S^2$) and define a smooth map $\Phi$ from the $r>M$ region near the degenerate horizon into $\mathscr{S}$ that mimics the Couch-Torrence inversion:
\[ \Phi : (v,r,\omega) \mapsto (v, \frac{rM}{r-M},\omega)\in \mathscr{S}.\]
We assume that we are working on open sets of the two manifolds such that this is a diffeomorphism with inverse:
\[ \Phi^{-1}: (u,z,\omega) \mapsto (u, \frac{zM}{z-M},\omega), \]
Pulling back $g$ to $\mathscr{S}$ by $\Phi^{-1}$, we find:
\begin{eqnarray}
(\Phi^{-1})^*g &=& -2 \d u \left( -\left( \frac{M}{z-M} \right)^2 \d z - \frac12 \frac{M^4}{(z-M)^2} \Gamma\left(\frac{zM}{z-M}\right) \d u \right) - \left( \frac{Mz}{z-M} \right)^2 \d \omega^2 \nonumber \\
&=& \left( \frac{M}{z-M} \right)^2 \left( 2 \d u \d z + M^2 \Gamma\left(\frac{zM}{z-M}\right) \d u^2 - z^2 \d \omega^2 \right) \, . \label{GenEBHCTMet}
\end{eqnarray}

%
This shows that we can generalise the Couch-Torrence inversion to a conformal isometry between a neighbourhood of the $r>M$ region near a degenerate horizon and a manifold $\mathscr{S}$ with metric
\[ h=2 \d u \d z + M^2 \Gamma\left(\frac{zM}{z-M}\right) \d u^2 - z^2 \d \omega^2.\]

Unless we are in the extreme Reissner-Nordström framework, we cannot assume $\Gamma(M)=M^{-2}$ and it looks like the metric \eqref{GenEBHCTMet} has peculiar asymptotic behaviour. To clarify this issue, let us do a detailed calculation using Schwarzschild-type coordinates. We define the variables $r_*$ and $t$ by
\begin{equation} \label{rstartSphSymDegHor}
(r-M)^2\Gamma(r) \d r_* = \d r  \, ,~ t = v-r_* \, .
\end{equation}
The metric \eqref{SphSymDegHorMet} is expressed as
\[ g= (r-M)^2 \Gamma(r) \d t^2 - \frac{1}{(r-M)^2\Gamma(r)} \d r^2 -r^2 \d \omega^2\, .\]
We perform the Couch-Torrence inversion . Let us also introduce coordinates $\tau$ and $z_*$ on $\mathscr{S}$ defined by:
\begin{equation} \label{DefTau} \tau=u+z_*, \quad z_*(z)= -r_*\left(\frac{Mz}{z-M}\right). \end{equation}
Note that by definition we have:
\begin{equation} \label{DefzstarDiff} \Gamma\left(\frac{Mz}{z-M}\right)M^2 \d z_* = \d z \end{equation}
In terms of variables $(t,r,\omega)$ and $(\tau, z, \omega)$, our map can be written:
\[ \Phi: (t,r,\omega)\mapsto (t, \frac{rM}{r-M}, \omega) \in \mathscr{S}, \]
 and inverse:
 \[ \Phi^{-1}: (\tau,z,\omega)\mapsto (\tau, \frac{zM}{z-M},\omega). \]
 Now pulling back $g$ to $\mathscr{S}$ by $\Phi^{-1}$ gives:
\begin{align*}
(\Phi^{-1})^*g &=  \frac{M^4}{(z-M)^2}\Gamma\left(\frac{zM}{z-M}\right) \d \tau^2 - \frac{1}{(z-M)^2\Gamma\left(\frac{zM}{z-M}\right)} \d z^2 - \left( \frac{Mz}{z-M} \right)^2 \d \omega^2 \, , \\
&= \left( \frac{M}{z-M} \right)^2 \left( M^2 \Gamma\left(\frac{zM}{z-M}\right) \d \tau^2 - \frac{1}{M^2 \Gamma\left(\frac{zM}{z-M}\right)} \d z^2 - z^2 \d \omega^2 \right) \, .
\end{align*}
Let $\alpha = M\sqrt{\Gamma(M)}$,
\[ (\Phi^{-1})^*g = \left( \frac{M}{(z-M) \alpha}\right)^2 \left( \alpha^2 M^2 \Gamma\left(\frac{zM}{z-M}\right) \d \tau^2 - \frac{\alpha^2}{M^2 \Gamma(\frac{zM}{z-M})} \d z^2 - \alpha^2 z^2 \d \omega^2 \right) \]
and putting
\[ \tau = \frac{\tilde{\tau}}{\alpha^2 } \, ,~ \Psi \left( \frac{1}{z} \right) = \frac{M^2\Gamma\left(\frac{zM}{z-M}\right)}{\alpha^2} \, ,\]
we get
\[ (\Phi^{-1})^*g = \left( \frac{M}{(z-M) \alpha}\right)^2 \left( \Psi \d {\tilde{\tau}\,}^2 - \frac{1}{\Psi} \d z^2 - \alpha^2 z^2 \d \omega^2 \right) \, ,\]
where $\Psi$ is an analytic positive function such that $\Psi (0)=1$.

The metric $(\Phi^{-1})^*g$ is conformally equivalent to
\[ \tilde{g} = \left( \frac{(z-M) \alpha}{M}\right)^2 (\Phi^{-1})^*g = \Psi \d {\tilde{\tau}\,}^2 - \frac{1}{\Psi} \d z^2 - \alpha^2 z^2 \d \omega^2\]
whose asymptotic structure as $z\rightarrow +\infty$ has the form
\[ \tilde{g}_\infty = \d \tilde{\tau}^2 - \d z^2 - \alpha^2 z^2 \d \omega^2 \, .\]
In the extreme Reissner-Nordström case, $\alpha =1$ and $\tilde{g}_\infty$ is the Minkowski metric. In general, we cannot expect that $\alpha =1$ and the metric $\tilde{g}$ will be asymptotically ``conical''. This is still asymptotically flat, but with a different rate of fall-off at infinity of the curvature on the spheres, compared to the Minkowski or the Schwarzschild metric. Near spacelike infinity, the spacelike slices look like the large ends of cones rather than the neighbourhood of infinity on $\R^3$.

We now turn to the conformal compactification of \eqref{GenEBHCTMet} near $z=+\infty$. Instead of multiplying the metric $(\Phi^{-1})^*g$ by $Z^2$, where $Z = 1/z$, we pull back via the Couch-Torrence inversion the rescaled metric
\[ \hat{g} = R^2 g \, ,~ R = 1/r \, ,\]
i.e.
\begin{equation} \label{SphSymDegHorMetResc}
\hat{g} = (1-MR)^2 \Gamma \left( \frac{1}{R} \right) \d v^2 +2 \d v \d R - \d \omega^2 \, .
\end{equation}
We have, using~\eqref{GenEBHCTMet} and returning to our initial coordinate system on $\mathscr{S}$:
\begin{align*}
(\Phi^{-1})^*\hat{g} &=  (R\circ \Phi^{-1})(\Phi^{-1})^*g = \frac{(z-M)^2}{M^2z^2} (\Phi^{-1})^*g \\
&= \frac{1}{z^2} \left( 2 \d u \d z + M^2 \Gamma\left(\frac{zM}{z-M}\right) \d u^2 - z^2 \d \omega^2 \right)
\end{align*}
putting
\[ Z = \frac{1}{z} \, ,~  f(Z)= M^2 \Gamma\left(\frac{zM}{z-M}\right) \, ,\]
we arrive at
\begin{equation}\label{RescaledMetricAtInfinityS} (\Phi^{-1})^*\hat{g} = Z^2f(Z)\d u^2 -2\d u\d Z - \d \omega^2 \, .\end{equation}
%
%
%
%
%
%
%
%
%
The function $f$ is analytic and positive on an interval of the form $[0,a [$, $a>0$ and the case $f(0)=1$ corresponds to those situations where $\Gamma(M) = 1/M^2$. In the extreme Reissner-Nordström case, we have $f(Z) = (1-MZ)^2$.

\subsection{Peeling for the wave equation on the image asymptotic region}

In this section, we revert to more usual notations, making the replacements $z\rightarrow r$, $Z\rightarrow R$ for the radial and inverted radial variables respectively. We will study the peeling at null infinity for the metric
\begin{equation} \label{ImageEBHCTMetric}
\hat{g} = R^2 f (R) \d u^2 - 2 \d u \d R - \d \omega^2 \, .
\end{equation}
Future null infinity ($\scri^+$) is described as $\{ R=0 \} \times \R_u \times S^2_\omega$. The scalar curvature of $\hat{g}$ is
\[ \mathrm{Scal}_{\hat{g}} = -R^{2} f''(R) - 4 \, R f'(R) + 2(1-f(R)) \]
and it is nonzero at $\scri^+$ unless $f(0)=1$.

Recall from equations~\eqref{DefTau} and \eqref{DefzstarDiff} that the variables $t$ and $r_*$ (denoted by $\tau$ and $z_*$ previously) satisfy:
\begin{equation}\label{DefRstar2} f(R) \d r_* = \d r \mbox{ or equivalently } - R^2 f(R) \d r_* = \d R \mbox{ and } t = u+r_* \, .\end{equation}
The metric $\hat{g}$ in terms of $(t,r_*,\omega)$ is expressed as
\begin{equation} \label{ImageEBHCTMetrictr*}
\hat{g} = \frac{\tilde{f}(r)}{r^2} \left( \d t^2 - \d r_*^2 \right) - \d \omega^2 \, ,~ \tilde{f}(r) := f\left( \frac{1}{r} \right) \, .
\end{equation}
The d'Alembertian for $\hat{g}$ is given by
\begin{equation}
\square_{\hat{g}}= -R^{2} f\left(R\right) \frac{\partial^2}{\partial R ^ 2} - {\left(R^{2} f'(R) + 2 \, R f\left(R\right)\right)}  \frac{\partial}{\partial R} - 2 \,  \frac{\partial^2}{\partial u\partial R} -\Delta_{S^2}.
\end{equation}
In terms of variables $(t,r_*,\omega)$ it takes the form
\begin{equation}
\square_{\hat{g}} = \frac{r^{2}}{\tilde{f}(r)} \left( \frac{\partial^2}{\partial t^ 2} - \frac{\partial^2}{\partial r_*^ 2} \right) -\Delta_{S^2}
\end{equation}
so the conformal wave equation reads
\begin{equation} \label{CWESph}
\frac{r^{2}}{\tilde{f}(r)} \left( \frac{\partial^2\phi}{\partial t^ 2} - \frac{\partial^2\phi}{\partial r_*^ 2} \right) -\Delta_{S^2} \phi + \frac16 \mathrm{Scal}_{\hat{g}} \phi =0 \, .
\end{equation}
We work in a neighbourhood of spacelike infinity of the form
\[ \Omega_{u_0} = \{ t\geq 0\} \cap \{ u< u_0 \} \]
for $u_0 \ll-1$ and we use the same foliation
\[ \mathcal{H}_{s,u_0} = \{ u = -s r_* \, ,~u< u_0 \} \, ,~ 0 \leq s \leq 1 \]
as in the extreme Reissner-Nordström case. The analogue of Lemma~\ref{ApproxCloseI0} in our new framework is the following.
\begin{lemma}\label{RuBounded2}
Let $0<\varepsilon<1$, then one can find $u_0<0$, $\vert u_0\vert $ large enough, such that in $\Omega_{u_0}$,
\[\frac{1-\varepsilon}{f(0)} < Rr_* < \frac{1+\varepsilon}{f(0)}, \quad 0 < R\vert u\vert  < \frac{1+\varepsilon}{f(0)}, \quad f(0)(1-\varepsilon) < f(R)< f(0)<(1+\varepsilon). \]
\end{lemma}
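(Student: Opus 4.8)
The plan is to reduce all three estimates to a single asymptotic observation: on $\Omega_{u_0}$ the constraint $t\geq 0$ forces $r_*$ — and hence $r$ — to be large and $R=1/r$ to be small, uniformly as $|u_0|\to\infty$, after which continuity of $f$ at $0$ and the limiting value of $Rr_*$ finish the argument. First I would note that on $\Omega_{u_0}=\{t\geq 0\}\cap\{u<u_0\}$ one has $u<u_0<0$, so $t=u+r_*\geq 0$ gives $r_*\geq |u|>|u_0|$. By~\eqref{DefRstar2}, $\d r_*/\d r = 1/\tilde{f}(r)>0$ with $\tilde{f}(r)=f(1/r)\to f(0)>0$ as $r\to\infty$, so $r\mapsto r_*$ is increasing and tends to $+\infty$; hence $r_*>|u_0|$ is equivalent to $r>r_1(|u_0|)$ for a threshold $r_1(|u_0|)\to +\infty$, i.e.\ to $R<R_0(|u_0|)$ with $R_0(|u_0|)\to 0$. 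Thus any pointwise limit as $R\to 0$ becomes a uniform bound on $\Omega_{u_0}$ once $|u_0|$ is taken large enough.

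The substance lies in the first inequality, for which I would prove $\lim_{r\to\infty} Rr_* = \lim_{r\to\infty} r_*/r = 1/f(0)$. Since $r\to\infty$ and $\d r_*/\d r = 1/\tilde{f}(r)\to 1/f(0)$, l'Hôpital's rule (equivalently, applying the Stolz--Cesàro theorem to $r_* = \mathrm{const} + \int^r \d s/\tilde{f}(s)$, whose integrand converges to $1/f(0)$) yields $r_*/r\to 1/f(0)$. Given $\varepsilon$, one then chooses $|u_0|$ large enough that $|Rr_* - 1/f(0)|<\varepsilon/f(0)$ throughout $\Omega_{u_0}$, which is precisely $\frac{1-\varepsilon}{f(0)}<Rr_*<\frac{1+\varepsilon}{f(0)}$. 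The second inequality is then immediate: $R>0$ together with $|u|\leq r_*$ (from $t\geq 0$) gives $0<R|u|\leq Rr_*<\frac{1+\varepsilon}{f(0)}$.

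For the third estimate, analyticity — hence continuity — of $f$ at $0$ gives $f(R)\to f(0)$ as $R\to 0$, so shrinking $R_0(|u_0|)$ if necessary ensures $f(0)(1-\varepsilon)<f(R)<f(0)(1+\varepsilon)$ on $\Omega_{u_0}$, which is the two-sided bound $f(R)\simeq f(0)$ needed in the sequel. I expect no genuine obstacle: the only non-algebraic point is the evaluation of $\lim r_*/r$, resting on $\tilde{f}(r)\to f(0)$ and the positivity of $f$ near $R=0$. The one point worth flagging is the sharper one-sided form $f(R)<f(0)$ appearing in the statement: this is special to the extreme Reissner--Nordström profile $f(R)=(1-MR)^2$, for which $0<MR<1$ near infinity forces $f(R)<f(0)=1$; in the general conical case only the two-sided estimate is available, and it is all that the subsequent energy comparison uses.
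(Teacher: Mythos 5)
Your proof is correct and takes essentially the same route as the paper, whose entire proof is the one-line remark that the lemma \enquote{is a direct consequence of L'H\^{o}pital's rule and \eqref{DefRstar2}}; you have merely supplied the details the paper leaves implicit, namely that $t\geq 0$ and $u<u_0$ force $r_*\geq \vert u\vert >\vert u_0\vert$ (hence $R$ uniformly small on $\Omega_{u_0}$), that l'H\^{o}pital applied to \eqref{DefRstar2} gives $r_*/r\to 1/f(0)$, and that continuity of $f$ at $0$ yields the last bound. Your reading of the final chain of inequalities as a typo for the two-sided bound $f(0)(1-\varepsilon)<f(R)<f(0)(1+\varepsilon)$ is also the right one: that is precisely the form used later in the proof of Proposition \ref{EqEnAF}.
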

\begin{proof}
It is a direct consequence of L'Hôpital's rule and \eqref{DefRstar2}.
\end{proof}

We choose once more the family of observers associated to the Morawetz vector field $K$ defined in \eqref{Morawetz}. Its Killing form for \eqref{ImageEBHCTMetric} reads
\begin{equation} \label{MorawetzKillingFormEBHCT}
\mathcal{L}_{K} \hat{g} = 2R \left(  2 (1- f(R)) - R (1+ uR) f'(R) \right) \mathrm{d} u\otimes \mathrm{d} u \, .
\end{equation}
In the extreme Reissner-Nordström case, the Killing form \eqref{MorawetzKillingFormRNE} of $K$ vanished at order $2$ at $\scri^+$, whereas \eqref{MorawetzKillingFormEBHCT} only generically vanishes at order $1$ (again we need $f(0)=1$ for it to vanish at order $2$). Moreover, the Morawetz vector field was timelike in the neighbourhood of $i^0$. Here, we have the following restriction:

\begin{lemma}\label{Restrictionf(0)}
We can choose $u_0 \ll-1$ such that the Morawetz vector field is uniformly timelike on $\Omega_{u_0}$ if and only if $f(0)>\frac{3}{4}$.
\end{lemma}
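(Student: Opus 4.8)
The plan is to reduce the timelike condition to the positivity of an explicit quadratic by computing $\hat{g}(K,K)$ directly. With $K = u^2\partial_u - 2(1+uR)\partial_R$ from \eqref{Morawetz} and $\hat{g}$ given by \eqref{ImageEBHCTMetric}, the only contributing components are $\hat{g}_{uu}=R^2f(R)$ and $\hat{g}_{uR}=-1$ (there is no $\d R^2$ term and $K$ has no angular part), so that
\[ \hat{g}(K,K) = u^2\left( R^2u^2 f(R) + 4uR + 4 \right) = u^2\,P(R|u|), \qquad P(\xi):= f(R)\,\xi^2 - 4\xi + 4, \]
where I set $\xi = R|u|$ and use $uR=-R|u|$ since $u<0$ on $\Omega_{u_0}$. (As a check, for $f(R)=(1-MR)^2$ this recovers the extreme Reissner–Nordström expression \eqref{NormMorawetz}.) Because $u^2>0$, the field $K$ is timelike exactly where $P(R|u|)>0$, and \emph{uniformly} timelike on $\Omega_{u_0}$ precisely when $\inf_{\Omega_{u_0}} P(R|u|)>0$.

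Next I would analyse $P$ as one approaches $i^0$, using Lemma~\ref{RuBounded2}. There $f(R)\to f(0)$, and by \eqref{DefRstar2} one has $Rr_*\to 1/f(0)$; moreover the constraint $t\ge 0$ defining $\Omega_{u_0}$ forces $|u|\le r_*$, hence $\xi=R|u|\le Rr_*$, with equality on the Cauchy slice $\{t=0\}$. Foliating by the leaves $\mathcal{H}_{s,u_0}$, where $u=-sr_*$, gives $\xi = s\,Rr_* \to s/f(0)$, so the limiting value of $P$ along the leaf of parameter $s$ is
\[ q(s):= f(0)\left(\frac{s}{f(0)}\right)^2 - \frac{4s}{f(0)} + 4 = \frac{s^2-4s}{f(0)} + 4. \]
Since $q'(s)=(2s-4)/f(0)<0$ on $[0,1]$, the function $q$ is decreasing, with minimum $q(1)=4-3/f(0)$ realised in the limit toward $i^0$ along the Cauchy slice. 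Equivalently, $P$ is strictly decreasing in $\xi$ on the admissible range $[0,(1+\varepsilon)/f(0))$, which lies to the left of its vertex $\xi=2/f(R)\approx 2/f(0)$, so its infimum is approached at the largest admissible $\xi=Rr_*\to 1/f(0)$. This pins down $4-3/f(0)$ as the threshold, which is positive iff $f(0)>\tfrac34$.

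Finally I would settle both implications. If $f(0)>\tfrac34$, then $q(1)>0$; choosing $\varepsilon$ small and $|u_0|$ large in Lemma~\ref{RuBounded2} and using $P(\xi)\ge f(0)(1-\varepsilon)\xi^2 - 4\xi + 4$ on $0\le\xi<(1+\varepsilon)/f(0)$, the right endpoint value tends to $q(1)$ as $\varepsilon\to0$, so it stays $\ge q(1)/2>0$; hence $P$ is bounded below by a positive constant uniformly on $\Omega_{u_0}$ and $K$ is uniformly timelike. Conversely, if $f(0)\le\tfrac34$, then along the Cauchy slice $\xi=Rr_*\to 1/f(0)$ forces $P\to q(1)\le 0$ as $R\to0$, so for \emph{every} choice of $u_0$ there are points of $\Omega_{u_0}$ arbitrarily close to $i^0$ on which $P$ is non-positive, ruling out uniform timelikeness.

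The only genuinely delicate point I anticipate is the uniformity in the forward direction: one must verify that the errors incurred by replacing $f(R)$ with $f(0)$ and $Rr_*$ with $1/f(0)$ can be absorbed so that the strict inequality $q(1)>0$ survives as a uniform lower bound on all of $\Omega_{u_0}$, rather than merely in the limit at $i^0$. This is exactly where the freedom to shrink $\varepsilon$ and enlarge $|u_0|$ afforded by Lemma~\ref{RuBounded2} is exploited; the evaluation of the quadratic and its monotonicity are routine.
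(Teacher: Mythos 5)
Your proposal is correct and follows essentially the same route as the paper: both compute $\hat{g}(K,K)=u^2\bigl(R^2u^2f(R)+4uR+4\bigr)$, use Lemma~\ref{RuBounded2} to control $R\vert u\vert$, $Rr_*$ and $f(R)$ near $i^0$, and locate the worst case at the Cauchy slice ($s=1$, i.e.\ $\xi=R\vert u\vert\to 1/f(0)$), giving the threshold $4-3/f(0)>0\iff f(0)>\tfrac34$; your monotonicity of the quadratic $P$ left of its vertex is just a reparametrisation of the paper's observation that $s(4-s)$ increases on $[0,1]$. The only nitpick is the marginal case $f(0)=\tfrac34$: there $P\to 0$ does not produce points where $P$ is non-positive, but since the infimum of $P$ is then $0$, uniform timelikeness still fails, so your conclusion stands.
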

\begin{proof}
The \enquote{squared norm} of $K$ is given by:
\begin{equation} \hat{g}(K,K)=R^2f(R)u^4+4u^2(1+Ru)=4u^2\left(1+ Ru + \frac{(R^2u^2)f(R)}{4} \right). \end{equation}
On $\mathcal{H}_{1,u_0}$, as one approaches $i_0$, we have $R\to 0$, $f(R)\to f(0)$ and $Ru\rightarrow -\frac{1}{f(0)}$ whence
\[1+ Ru +\frac{1}{4}(Ru)^2f(R) \to 1 -\frac{3}{4f(0)}  \]
which is negative if $f(0)<\frac{3}{4}$. In this case, $K^a$ is not timelike on the whole of $\Omega_{u_0}$. On $\mathcal{H}_{0,u_0}$, $\hat{g}(K,K)=4u^2$ independently of the value of $f(0)$ and on $\mathcal{H}_{s,u_0}$, $0<s\leq 1$ we have:
\[\begin{aligned} \hat{g}(K,K)&=4u^2\left( 1-sRr_*+\frac{1}{4}s^2(Rr_*)^2f(R)\right) \\&\geq 4u^2\left(1-s\frac{1+\varepsilon}{f(0)}+\frac{1}{4}s^2\frac{(1-\varepsilon)^3f(0)}{f(0)^2} \right) \\&= 4u^2\left(1-\frac{s(4-s)}{4f(0)} + O(\varepsilon) \right).\end{aligned} \]
On $[0,1]$, $s(4-s)$ is increasing and varies between $0$ and $3$. So if $f(0)>\frac{3}{4}$, choosing $\varepsilon$ sufficiently small enables us to ensure that $K$ is uniformly timelike on the whole of $\Omega_{u_0}$. The case $f(0)=\frac{3}{4}$ is marginal, $K$ becomes null at $i_0$ and, depending on the behaviour of $r_*$, may also become null or timelike near $\mathcal{H}_{1,u_0}$.
\end{proof}

Note that the closer $f(0)$ is to $\frac{3}{4}$, the smaller one has to choose $\Omega_{u_0}$ in order to ensure that $K$ remains timelike uniformly on $\Omega_{u_0}$.

With this restriction in mind, we turn to the calculation of the energy flux across $\mathcal{H}_{s,u_0}$.
 A normal vector field to $\mathcal{H}_{s,u_0}$ is given by
\[ n = \partial_u + R^2 f(R) \frac{r_*}{u}\left( 1-s \right) \partial_R = \partial_u + R^2 f(R) \frac{s-1}{s} \partial_R \]
and
\[ l = - \partial_R \]
is a future-oriented transverse vector field to $\mathcal{H}_{s,u_0}$ such that $\hat{g} (l,n) =1$.
The energy on $\mathcal{H}_{s,u_0}$ is:
\begin{eqnarray*}
\mathcal{E}_{\mathcal{H}_{s,u_0}}(\phi)&=& \int_{]-\infty , u_0[_u \times S^2_\omega} \left( R^{2} u^{2} f\left(R\right) \frac{\partial\,\phi}{\partial u} \frac{\partial\,\phi}{\partial R} +  u^{2} \frac{\partial\,\phi}{\partial u}^{2} \right. \\
&&+ {\left(\frac{R^{4} u^{2} f\left(R\right)^{2} }{2s} + R^3 f(R)  (2-s)\frac{u}{s} + {\frac{R^{2} f(R) (2-s)}{s}} \right)} \frac{\partial\,\phi}{\partial R}^{2} \\
&&\left. + {\left(\frac{R^{2} u^{2} f\left(R\right) }{2s}+ R  u + 1 \right)} \left\vert \nabla_{S^2} \phi \right\vert^2  \right) \d u \d^2 \omega \, ,\\
&=&\int_{]-\infty , u_0[_u \times S^2_\omega} \left( R^{2} u^{2} f\left(R\right) \frac{\partial\,\phi}{\partial u} \frac{\partial\,\phi}{\partial R} +  u^{2} \frac{\partial\,\phi}{\partial u}^{2} \right. \\
&&+ \frac{Rr_* f(R)}{2} \frac{R}{\vert u \vert} {\left((R \vert u \vert)^2 f(R) - 2(2-s) R\vert u \vert +  2(2-s)\right)} \frac{\partial\,\phi}{\partial R}^{2} \\
&&\left. + {\left( 1 + R \vert u \vert \left( \frac{Rr_* f(R) }{2}-1\right) \right)} \left\vert \nabla_{S^2} \phi \right\vert^2  \right) \d u \d^2 \omega \, .
\end{eqnarray*}
On $\scri^+ \cap \{ u<u_0\}$, this reduces to
\[ \mathcal{E}_{\mathcal{H}_{0,u_0}}(\phi) =  \int_{]-\infty , u_0[_u \times S^2_\omega} \left( u^{2} \frac{\partial\,\phi}{\partial u}^{2} + \left\vert \nabla_{S^2} \phi \right\vert^2  \right) \d u \d^2 \omega \, . \]
\begin{proposition} \label{EqEnAF}
The energy on $\mathcal{H}_{s,u_0}$ is equivalent, uniformly in $s \in [0,1]$ to the simpler expression
\[ \int_{]-\infty , u_0[_u \times S^2_\omega} \left( u^{2} \frac{\partial\,\phi}{\partial u}^{2} + \frac{R}{\vert u \vert} \frac{\partial\,\phi}{\partial R}^{2} + \left\vert \nabla_{S^2} \phi \right\vert^2  \right) \d u \d^2 \omega \, . \]
\end{proposition}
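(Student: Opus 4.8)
The plan is to reproduce, almost verbatim, the proof of Proposition~\ref{EnEqFormRNE}, the only structural difference being that the constant $f(0)$ now appears everywhere in place of the value $1$ that held automatically in the extreme Reissner-Nordström case (there $(1-MR)^2\to 1$). I would start from the second, explicit form of $\mathcal{E}_{\mathcal{H}_{s,u_0}}(\phi)$ and estimate its three ingredients (the coefficient of $|\nabla_{S^2}\phi|^2$, the coefficient of $(\partial_R\phi)^2$, and the cross term $R^2u^2f(R)\,\partial_u\phi\,\partial_R\phi$) separately, using Lemma~\ref{RuBounded2} in place of Lemma~\ref{ApproxCloseI0} and insisting throughout on bounds uniform in $s\in[0,1]$.

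For the coefficient $1+R|u|\big(\tfrac{Rr_*f(R)}{2}-1\big)$ of $|\nabla_{S^2}\phi|^2$ I would use $R|u|=sRr_*$ on $\mathcal{H}_{s,u_0}$ together with $Rr_*\simeq 1/f(0)$, $f(R)\simeq f(0)$ and $R|u|<(1+\varepsilon)/f(0)$ from Lemma~\ref{RuBounded2}; this places the coefficient between a positive constant close to $1-\tfrac{1}{2f(0)}$ (positive since $f(0)>\tfrac34$) and $1$, so it is $\simeq 1$. The coefficient of $(\partial_R\phi)^2$ factorises as $\tfrac{Rr_*f(R)}{2}\,\tfrac{R}{|u|}\,P(R|u|)$ with $P(X)=X^2f(R)-2(2-s)X+2(2-s)$; the prefactor is $\simeq\tfrac12$, and substituting $R|u|=sRr_*$ and the limiting values shows that $P$ converges uniformly, up to $O(\varepsilon)$, to the convex quadratic $Q(s)=\tfrac{3s^2-4s}{f(0)}+2(2-s)$. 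Its minimum over $[0,1]$ is attained at $s_*=(2+f(0))/3$ with value $\big(-f(0)^2+8f(0)-4\big)/(3f(0))$, positive once $f(0)>4-2\sqrt3$; hence $P\simeq 1$ uniformly (and $P=4$ at $s=0$, recovering $\mathcal{E}_{\mathcal{H}_{0,u_0}}$), giving a coefficient $\simeq R/|u|$.

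It remains to handle the cross term. Writing $R^2u^2f(R)=\sqrt{R/|u|}\,|u|\,f(R)(R|u|)^{3/2}$ and applying Young's inequality with a parameter $\lambda$ splits it into multiples of $u^2(\partial_u\phi)^2$ and $\tfrac{R}{|u|}(\partial_R\phi)^2$; the upper bound in the asserted equivalence is then immediate. For the lower bound I must choose $\lambda$ so that both split coefficients are strictly dominated by the coefficients already present (respectively $1$ and $\simeq\tfrac12 Q(s)$). Keeping the factor $s^{3/2}$ inside the cross coefficient rather than crudely bounding it by $1$, the compatibility condition for such a $\lambda$ reduces, as $\varepsilon\to0$, to the pointwise inequality $s^3<2f(0)\,Q(s)$ on $[0,1]$.

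This inequality is the crux of the argument and the step I expect to be the main obstacle: in the extreme Reissner-Nordström case $f(0)=1$ forces $P\ge 1$ and the cross term is absorbed with room to spare, whereas here the $s$-dependence of both the cross coefficient and of $Q$ must be tracked at once. Setting $g(s)=2f(0)Q(s)-s^3=-s^3+6s^2-(8+4f(0))s+8f(0)$, one has $g(1)=4f(0)-3$ and $\partial g/\partial f(0)=4(2-s)>0$, while at the threshold the factorisation $g(s)=-(s-1)(s-2)(s-3)\ge 0$ holds on $[0,1]$; thus $g>0$ on $[0,1]$ exactly when $f(0)>\tfrac34$. This is precisely the range in which $K$ is uniformly timelike (Lemma~\ref{Restrictionf(0)}), so positive-definiteness of the energy flux and timelikeness of the Morawetz field coincide. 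With $g>0$ secured, a small enough $\varepsilon$ and an admissible $\lambda$ yield the lower bound, and the two-sided equivalence follows uniformly in $s$, exactly as in Proposition~\ref{EnEqFormRNE}.
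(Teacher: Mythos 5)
Your proposal is correct, and at the decisive step it is actually \emph{sharper} than the paper's own proof. The skeleton is the same as in the paper (and as in Proposition~\ref{EnEqFormRNE}): estimate the coefficient of $\vert\nabla_{S^2}\phi\vert^2$, write the coefficient of $(\partial_R\phi)^2$ as $\left(\tfrac12+O(\varepsilon)\right)\tfrac{R}{\vert u\vert}\,P$, and absorb the cross term by a Young splitting, everything resting on Lemma~\ref{RuBounded2}. The difference lies in the absorption. The paper bounds the cross coefficient by its worst value over $s$ (it discards the factor $s^{3/2}$ coming from $R\vert u\vert = sRr_*$) and separately bounds $P$ by its worst value over $s$, namely $P\ge \tfrac{23}{36}+O(\varepsilon)$; since these two worst cases occur at \emph{different} leaves ($s=1$ versus $s_*=(f(0)+2)/3$), this decoupling loses information, and the resulting requirement $\tfrac{36}{23f(0)}<\lambda^2<2$ is a nonempty window only for $f(0)>\tfrac{18}{23}$, not for all $f(0)>\tfrac34$; the paper's claim that $\tfrac{104}{69}<\lambda^2<2$ suffices is an arithmetic slip (even at $f(0)=1$ the needed lower bound is $\tfrac{36}{23}=\tfrac{108}{69}>\tfrac{104}{69}$). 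Your version keeps the $s$-dependence of the cross coefficient and of $P$ coupled, which yields the sharp pointwise condition: positive-definiteness of the $2\times 2$ form in $\left(\vert u\vert\,\vert\partial_u\phi\vert,\sqrt{R/\vert u\vert}\,\vert\partial_R\phi\vert\right)$ amounts to $s^3<2f(0)Q(s)$ (in your normalisation of $Q$), i.e.\ $g(s)=-s^3+6s^2-(8+4f(0))s+8f(0)>0$ on $[0,1]$; your factorisation $g=-(s-1)(s-2)(s-3)$ at $f(0)=\tfrac34$, combined with $\partial g/\partial f(0)=4(2-s)>0$, gives $g\ge 4\left(f(0)-\tfrac34\right)>0$ uniformly on $[0,1]$, exactly when $f(0)>\tfrac34$. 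So your argument proves the proposition on the whole range where the Morawetz field is timelike (Lemma~\ref{Restrictionf(0)}), repairs the paper's gap on $\tfrac34<f(0)\le\tfrac{18}{23}$, and explains why the two thresholds coincide. Two small points of care: the sharp condition $s^3<2f(0)Q(s)$ corresponds to letting the Young parameter $\lambda$ depend on $s$, which is legitimate since the splitting is applied pointwise under the integral (a single $\lambda$ would instead require $s^{3/2}<2f(0)Q(s)$, which still holds here but is no longer a polynomial condition); and your formula for the minimum of $Q$ at $s_*=(f(0)+2)/3$ is valid only for $f(0)\le 1$ --- for $f(0)>1$ the minimum over $[0,1]$ sits at $s=1$ --- though your $g$-argument never actually uses that formula.
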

\begin{proof}
Using Lemma~\ref{RuBounded2}, we have
\[\begin{aligned} 1+ R\vert u\vert \left(\frac{Rr_*f(R)}{2}-1 \right)&\geq 1+ R\vert u\vert \left( \frac{(1-\varepsilon)^2}{2}-1 \right)\\
&\geq 1+ \frac{(1+\varepsilon)}{f(0)}\left(\frac{(1-\varepsilon)^2}{2}-1 \right)\\&\geq 1-\frac{1}{2f(0)} + O(\varepsilon)> \frac{1}{3}+ O(\varepsilon). \end{aligned} \]
In the last inequality we use the restriction identified in Lemma~\ref{Restrictionf(0)}.

As for the factor of $(\partial_R \phi)^2$, it is the product of
\[\frac{Rr_*f(R)}{2}\frac{R}{\vert u\vert }= \left(\frac{1}{2} + O(\varepsilon)\right)\frac{R}{\vert u\vert } \]
and the quantity:
\[ P(R\vert u\vert ):=(R\vert u\vert )^2f(R)-2(2-s)R\vert u\vert +2(2-s).\]
On $\mathcal{H}_{s,u_0}$ for $0<s\leq 1$:
\[\begin{aligned} P(R\vert u\vert )&=(Rr_*)^2f(R)s^2+2(2-s)(1-(Rr_*)s)
\\&\geq \frac{(1-\varepsilon)^3}{f(0)}s^2 +2(2-s)\left(1-\frac{1+\varepsilon}{f(0)}s \right)\\
&= \frac{1}{f(0)}\left(s^2 + 2(2-s)f(0)-2(2-s)s \right)+ O(\varepsilon)
\\&= \frac{1}{f(0)}\left(3s^2 -2s(f(0)+2)+4f(0)  \right)+O(\varepsilon).\end{aligned}\]

Let us briefly consider the polynomial:
\[ Q(s):=3s^2 - 2s(f(0)+2)+4f(0), \]
$Q$ attains its minimum value at $s_\textrm{min}=\frac{1}{3}(f(0)+2)$, this is in the interval $[0,1]$ if and only if $f(0)\leq 1$ (we assume $f(0)>0$).
When this is the case then the minimum value is given by:
\[ Q_{\textrm{min}}=4f(0)-\frac{1}{3}(f(0)+2)^2, \]
which is positive for $0<f(0)\leq 1$ if and only if $f(0)>4-2\sqrt{3}$. Since $4-2\sqrt{3} < \frac{3}{4}$ and we assume $f(0)>\frac{3}{4}$, it follows that: $Q_\textrm{min}> 0$.
Note that, in this case, $Q_\textrm{min}>\frac{23}{48}$.

When $f(0)>1$, the minimum is reached beyond $s=1$, thus the minimum on the interval $[0,1]$ is:
\[Q(1)=2f(0) -1>1. \]
It follows that, overall when $f(0)>\frac{3}{4}$:
\[ P(R\vert u\vert ) \geq \frac{23}{48 f(0)} + O(\varepsilon)>\frac{23}{36} +O(\varepsilon),\]
and:
\[ \frac{Rr_*f(R)}{2}\frac{R}{\vert u\vert }P(R\vert u\vert ) \geq \left( \frac{23}{72} + O(\varepsilon) \right)\frac{R}{\vert u\vert },\]

Finally we turn to the control of the term involving $\partial_R \phi \partial_u \phi$ on $\Omega_{u_0}$,
\[\begin{aligned} \left\lvert R^2u^2f(R)\partial_u\phi \partial_R\phi \right\rvert&=\sqrt{\frac{R}{\vert u\vert }}\vert u\vert (R\vert u\vert )^\frac{3}{2}f(R)\left\lvert\partial_u\phi \partial_R\phi\right\rvert \\ &\leq \sqrt{\frac{R}{\vert u\vert }}\vert u\vert \frac{(1+\varepsilon)^{\frac{5}{2}}}{\sqrt{f(0)}}\left\lvert\partial_u\phi \partial_R\phi\right\rvert.\end{aligned}\]
Now for any $\lambda >0$ we have:
\[ \left\lvert R^2u^2f(R)\partial_u\phi \partial_R\phi \right\rvert \leq (1+\varepsilon)^{\frac{5}{2}}\left( \frac{\lambda^2}{2}u^2(\partial_u\phi)^2 +\frac{1}{2\lambda^2f(0)}\frac{R}{\vert u\vert }(\partial_R\phi)^2 \right). \]
The proposition will be proved if we can choose $\lambda$ such that the following conditions hold:
\[ \frac{36}{23}\frac{1}{f(0)}<\lambda^2<2.\]
Since $f(0)>\frac{3}{4}$ it is sufficient to impose:
\[ \frac{104}{69}=1+\frac{35}{69} < \lambda^2 < 2, \]
which is always possible.
\end{proof}
The error terms in the approximate conservation law for $J^a$ are
\begin{eqnarray}
\hat{\nabla}^a J_a &=& \frac{1}{6} \, {\left(R^{2} \frac{\partial^2\,f}{\partial R ^ 2} + 4 \, R f'(R) + 2 \, f\left(R\right) - 2\right)} u^{2} \phi \frac{\partial\,\phi}{\partial u} \nonumber\\
&&- \frac{1}{3} \, \bigg(R^{2} f''( R) + {\left(R^{3} f''(R) + 4 \, R^{2} f'(R) + 2 \, R f\left(R\right) - 2 \, R\right)} u \nonumber \\
&& \hspace{0.5in} + 4 \, R f'(R) + 2 \, f\left(R\right) - 2\bigg) \phi \frac{\partial\,\phi}{\partial R} \nonumber \\
&&- 2 \, {\left(R^{2} (1+R u) f'(R)  + 2 \, R f\left(R\right) - 2 \, R\right)} \frac{\partial\,\phi}{\partial R}^{2} \, .
\end{eqnarray}
Thanks to the result of Proposition \ref{EqEnAF}, these can be dealt with as before; the same is true for higher order estimates. We obtain the following theorem.
\begin{theorem}
For the metric $\hat{g}$ defined on $\mathscr{S}$ by \eqref{ImageEBHCTMetric} and under the assumption that $f(0)>3/4$, Theorem \ref{ThmPeelingRNEInfinity} is valid for solutions to Equation \eqref{CWESph} with the operator $L$ given by
\[ L =\begin{pmatrix} -\frac{r^2}{f(1/r)} \partial_{r_*} & -\frac{r^2}{f(1/r)} \\ -\frac{r^2}{f(1/r)}\partial_{r_*}^2- \Delta_{S^2} +\frac16 \mathrm{Scal}_{\hat{g}} & -\frac{r^2}{f(1/r)} \partial_{r_*} \end{pmatrix} . \]
\end{theorem}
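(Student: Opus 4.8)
The plan is to rerun, in the present geometry, the argument underlying Theorem~\ref{ThmPeelingRNEInfinity}, all the geometric ingredients having been prepared above. First I would choose $u_0\ll-1$ so that $\Omega_{u_0}$ is small enough for Lemma~\ref{Restrictionf(0)} to make the Morawetz field $K$ of \eqref{Morawetz} uniformly timelike --- this is where the hypothesis $f(0)>3/4$ is used --- and for the bounds of Lemma~\ref{RuBounded2} to hold; Proposition~\ref{EqEnAF} then supplies, uniformly in $s\in[0,1]$, the coercive equivalent form of $\mathcal{E}_{\mathcal{H}_{s,u_0}}(\phi)$. For $\phi$ a smooth solution of \eqref{CWESph} supported away from $i^0$, Stokes' theorem applied to $\d(\star J(\phi))=\hat\nabla^aJ_a(\phi)\,\d^4\mathrm{Vol}$ over $\Omega_{u_0}$ produces a fundamental identity of the same form as \eqref{FEI}, relating $\mathcal{E}_{\scri^+_{u_0}}(\phi)+\mathcal{E}_{\mathcal{S}_{u_0}}(\phi)$, $\mathcal{E}_{\mathcal{H}_{1,u_0}}(\phi)$ and the bulk integral of $\hat\nabla^aJ_a(\phi)$; splitting that bulk integral along the foliation as in \eqref{Splitting4Vol}--\eqref{Equiv3Vol} yields the decisive factor $1/\vert u\vert$ in front of each slice integral.

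The one genuinely new phenomenon is that $\mathrm{Scal}_{\hat g}$ no longer vanishes at $\scri^+$ when $f(0)\neq1$. Indeed, in the displayed expression for $\hat\nabla^aJ_a$ the coefficient of $(\partial_R\phi)^2$ still factors an overall $R$ and hence decays, but the coefficients of $u^2\phi\,\partial_u\phi$ and of $\phi\,\partial_R\phi$ keep pieces proportional to $2f(0)-2$ that survive at $R=0$. The first, after multiplication by $1/\vert u\vert$, is of size $\vert u\vert\,\vert\phi\vert\,\vert\partial_u\phi\vert\lesssim\phi^2+u^2(\partial_u\phi)^2$ and is absorbed into $\mathcal{E}_{\mathcal{H}_{s,u_0}}(\phi)$ once the Poincaré inequality of \cite{MaNi2009} disposes of the $\phi^2$ term. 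The second is controlled by the very device introduced for massive fields: using $\tfrac{1}{\vert u\vert}\simeq\tfrac{1}{\sqrt s}\sqrt{R/\vert u\vert}$ on $\mathcal{H}_{s,u_0}$, the term $\tfrac{1}{\vert u\vert}\vert\phi\vert\,\vert\partial_R\phi\vert$ becomes $\lesssim\tfrac{1}{\sqrt s}\big(\phi^2+\tfrac{R}{\vert u\vert}(\partial_R\phi)^2\big)\lesssim\tfrac{1}{\sqrt s}\,\mathcal{E}_{\mathcal{H}_{s,u_0}}(\phi)$, and $1/\sqrt s\in L^1(0,1]$. Grönwall's lemma then delivers the estimates both ways between $\scri^+_{u_0}\cup\mathcal{S}_{u_0}$ and $\mathcal{H}_{1,u_0}$.

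For the higher-order estimates I would commute $\partial_R$ and $\nabla_{S^2}$ into \eqref{CWESph} and build the associated currents $J(\partial_R^q\nabla_{S^2}^p\phi)$. As in Section~\ref{PeelingRNE}, the commutators and the zeroth-order couplings produce error terms without decay at $\scri^+$ --- all the more so since now $\mathrm{Scal}_{\hat g}(0)\neq0$ --- but each of them, after the gain of $1/\vert u\vert$ from the volume splitting, falls exactly into the class treated above for the first-order $\partial_R$-estimate and for the massive field, so the integrability of $1/\sqrt s$ again closes the induction at every order. The operator $L$ is then read off as in Remark~\ref{RkOpL}: from \eqref{DefRstar2} one has $\partial_R=-\frac{r^2}{f(1/r)}(\partial_{r_*}+\partial_t)$, whence $\partial_R$ acts on ${}^t(\phi,\partial_t\phi)$ by the matrix carrying $-\frac{r^2}{f(1/r)}\partial_t^2$ in the lower-left slot; substituting $\partial_t^2\phi$ from \eqref{CWESph} converts that entry into $-\frac{r^2}{f(1/r)}\partial_{r_*}^2-\Delta_{S^2}+\tfrac16\mathrm{Scal}_{\hat g}$, which is precisely the stated $L$. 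The main obstacle is therefore not any isolated estimate but the systematic verification that, under the sole hypothesis $f(0)>3/4$, every non-decaying error term arising at arbitrary order is of the $1/\sqrt s$-integrable type; once this bookkeeping is carried out, the density argument and the characterisation of the peeling classes carry over verbatim from Theorem~\ref{ThmPeelingRNEInfinity}.
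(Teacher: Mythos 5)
Your proposal is correct and follows essentially the same route as the paper: the paper's own proof consists precisely of Lemmas~\ref{RuBounded2} and \ref{Restrictionf(0)}, Proposition~\ref{EqEnAF}, the displayed error terms, and then the remark that these ``can be dealt with as before'' --- i.e.\ via the $1/\vert u\vert$ gain from the volume splitting, the Poincar\'e inequality, and the integrable $1/\sqrt{s}$ device already used for the non-decaying terms in Section~\ref{PeelingRNE}, with $L$ read off as in Remark~\ref{RkOpL}. In fact you make explicit what the paper leaves implicit, namely that the genuinely new error terms are the pieces proportional to $2f(0)-2$ surviving at $R=0$ and that they fall exactly into the $1/\sqrt{s}$-integrable class.
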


\subsection{Peeling at the degenerate horizon}
As before, the Couch-Torrence inversion provides a dictionary between objects near the horizon with rescaled metric defined by~\eqref{SphSymDegHorMetResc} and those in $\mathscr{S}$ with the metric~\eqref{RescaledMetricAtInfinityS}. By construction of $\mathscr{S}$, it is in all points identical to the dictionary in Section~\ref{PeelingAtHorizonRNE}, under the proviso that we relate objects on two different spacetimes. It is again possible to include massive fields in our treatment following the same steps outlined in Section~\ref{PeelingKleinGordonRNE}. This leads to the following generalisation of Theorem~\ref{PeelingHorizonRNE}:

\begin{theorem}
The conclusions of Theorem~\ref{ThmPeelingHorizonERN} hold for solutions of the equation:
\[ \Box_{\hat{g}}\phi +\frac{1}{6}\textrm{Scal}_{\hat{g}} \phi +r^2m^2 \phi =0 \]
where $\hat{g}$ is the metric defined by Equation~\eqref{SphSymDegHorMetResc} of which the scalar curvature is given by:
\[\textrm{Scal}_{\hat{g}}=2 - \frac{\d^2}{\d R^2}\left(\Gamma\left(\frac{1}{R}\right)(1-RM)^2\right)=2-r^2\frac{\d}{\d r}r^2\frac{\d}{\d r}\left(\frac{\Gamma(r)(r-m)^2}{r^2} \right) \]
and the operator $\tilde{L}$ is replaced by:
\[ \tilde{L}=\frac{r^2}{(r-M)^2\Gamma(r)} \begin{pmatrix} -\partial_{r_*} & 1 \\ \partial_{r_*}^2 +\frac{(r-M)^2\Gamma(r)}{r^2}\Delta_{S^2}-\frac{(r-M)^2\Gamma(r)}{6r^2}\textrm{Scal}_{\hat{g}} -r^2m^2 & -\partial_{r_*} \end{pmatrix}. \]
\end{theorem}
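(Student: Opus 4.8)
The plan is to transfer the peeling result just obtained at $\scri^+$ on $\mathscr{S}$ to the degenerate horizon $\scrh^+$, following step by step the passage from Theorem~\ref{ThmPeelingRNEInfinity} to Theorem~\ref{ThmPeelingHorizonERN} together with the treatment of mass in Section~\ref{PeelingKleinGordonRNE}. The one structural novelty is that the inversion $\Phi$ now relates two \emph{distinct} spacetimes: a neighbourhood of the $r>M$ region near $\scrh^+$ carrying the rescaled metric $\hat{g}$ of \eqref{SphSymDegHorMetResc}, and $\mathscr{S}$ carrying \eqref{ImageEBHCTMetric}. Because $(\Phi^{-1})^*\hat{g}$ is exactly the metric \eqref{RescaledMetricAtInfinityS}, $\Phi$ is a genuine isometry between the two, so it intertwines $\square_{\hat{g}}$ and $\mathrm{Scal}_{\hat{g}}$; as in Section~\ref{PeelingAtHorizonRNE} it preserves time-orientation but reverses the global orientation, so the sign rule \eqref{PullbackHodgeCT} governs the transfer of energy fluxes.

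First I would reduce the horizon statement to a peeling statement at $\scri^+$ on $\mathscr{S}$. Since $\Omega=R\to 1/M$ at $\scrh^+$, the conformal rescaling stays regular there and massive fields are admissible exactly as in Section~\ref{PeelingKleinGordonRNE}: a physical solution $\psi$ of $\square_g\psi+\frac16\mathrm{Scal}_g\psi+m^2\psi=0$ corresponds through $\phi=R^{-1}\psi$ to a solution of $\square_{\hat{g}}\phi+\frac16\mathrm{Scal}_{\hat{g}}\phi+r^2m^2\phi=0$. Transporting this equation to $\mathscr{S}$ by $\Phi$ sends the scalar $r^2=R^{-2}$ to its image under the inversion, $\left(M/(1-MR)\right)^2$, so on $\mathscr{S}$ one analyses
\[ \square_{\hat{g}}\phi+\tfrac16\mathrm{Scal}_{\hat{g}}\phi+\left(\frac{Mm}{1-MR}\right)^2\phi=0, \]
whose mass coefficient is bounded (tending to $(Mm)^2$ at $\scri^+$) but does not decay --- the exact analogue of \eqref{RescWeqPotCT}.

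Next I would prove peeling at $\scri^+$ on $\mathscr{S}$ for this equation, recycling the energy estimates of the previous subsection, which are valid under the restriction $f(0)>3/4$ of Lemma~\ref{Restrictionf(0)} and the energy equivalence of Proposition~\ref{EqEnAF}. Keeping the Morawetz field $K$ of \eqref{Morawetz} and the same stress-energy current, the only new contribution to the approximate conservation law is $-\left(Mm/(1-MR)\right)^2\phi\,\hat{\nabla}_K\phi$; this has no decay at $\scri^+$, but once integrated against the $1/\lvert u\rvert$ factor produced by the splitting of the $4$-volume along the foliation $\mathcal{H}_{s,u_0}$, it is controlled by the energy exactly as the non-decaying mass term was handled at first order in Section~\ref{PeelingKleinGordonRNE}. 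The higher-order estimates, obtained by commuting $\partial_R$ and $\nabla_{S^2}$ into the equation, go through verbatim, the new undifferentiated potential terms being absorbed in the same way. Translating back through the dictionary of Section~\ref{PeelingAtHorizonRNE} --- now between the two spacetimes, with $\tilde{K}=\Phi^*K$, $T_{ab}(\Phi^*\phi)=\Phi^*T_{ab}(\phi)$ and the flux identities carried by \eqref{PullbackHodgeCT} --- yields the claimed equivalence of energies at $\scrh^+$.

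Finally I would identify $\mathrm{Scal}_{\hat{g}}$ and $\tilde{L}$ by direct computation. The scalar curvature comes out either head-on or from the ingoing analogue of the $\scri^+$-side formula $\mathrm{Scal}=2-(R^2 f)''$, applied to the coefficient of $\d v^2$ in \eqref{SphSymDegHorMetResc}, namely $(1-MR)^2\Gamma(1/R)$; this gives $\mathrm{Scal}_{\hat{g}}=2-\frac{\d^2}{\d R^2}\big((1-MR)^2\Gamma(1/R)\big)$, whose $r$-form follows through $\frac{\d}{\d R}=-r^2\frac{\d}{\d r}$. For $\tilde{L}$ I would repeat the computation of Remark~\ref{RkOpL} in ingoing coordinates, where $\partial_R=\frac{r^2}{F(r)}(\partial_t-\partial_{r_*})$ with $F(r)=(r-M)^2\Gamma(r)$; applying this to the pair $(\phi,\partial_t\phi)$ and using $\square_{\hat{g}}=\frac{r^2}{F}(\partial_t^2-\partial_{r_*}^2)-\Delta_{S^2}$ to replace $\partial_t^2\phi$ by $\partial_{r_*}^2\phi+\frac{F}{r^2}\Delta_{S^2}\phi-\frac{F}{6r^2}\mathrm{Scal}_{\hat{g}}\phi-Fm^2\phi$ produces $\tilde{L}$ (a consistency check is that its mass entry reduces to the $-r^2m^2$ of Theorem~\ref{PeelingHorizonRNE} when $\Gamma=1/r^2$). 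The main difficulty is not analytic --- the non-decaying transformed mass term is tamed by the $1/\lvert u\rvert$ gain exactly as before --- but bookkeeping: keeping the two spacetimes and their coordinate charts rigorously separate throughout the dictionary, and retaining the conical-asymptotics restriction $f(0)>3/4$ inherited from the timelikeness of $K$.
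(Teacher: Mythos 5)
Your proposal is correct and is essentially the paper's own argument: the Couch--Torrence dictionary of Section~\ref{PeelingAtHorizonRNE}, now relating two distinct spacetimes (the horizon neighbourhood carrying the metric \eqref{SphSymDegHorMetResc} and $\mathscr{S}$ carrying \eqref{RescaledMetricAtInfinityS}), combined with the treatment of the bounded, non-decaying transferred mass term $\left(Mm/(1-MR)\right)^2$ exactly as in Section~\ref{PeelingKleinGordonRNE}, all resting on the peeling theorem at $\scri^+$ of $\mathscr{S}$ proved under the restriction $f(0)>3/4$.

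Two further points, both in your favour. First, your explicit computation of $\tilde{L}$ yields the factored mass entry $-(r-M)^2\Gamma(r)m^2$, i.e.\ $-r^2m^2$ after multiplication by the prefactor $\frac{r^2}{(r-M)^2\Gamma(r)}$. The theorem as printed places $-r^2m^2$ \emph{inside} the factored matrix, which, upon specialising to $\Gamma(r)=1/r^2$, would give an unfactored mass entry $-\frac{r^6}{(r-M)^2}m^2$ rather than the $-r^2m^2$ of Theorem~\ref{PeelingHorizonRNE}; your version is the one consistent with the equation $\Box_{\hat{g}}\phi+\frac16\mathrm{Scal}_{\hat{g}}\phi+r^2m^2\phi=0$ and with the extreme Reissner--Nordstr\"om case, so your derivation in fact corrects a misplacement of the mass term in the printed operator rather than containing an error. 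Second, you are right to carry along the hypothesis $f(0)=M^2\Gamma(M)>3/4$ inherited from the timelikeness of the Morawetz field near $i^0$ on $\mathscr{S}$ (Lemma~\ref{Restrictionf(0)}); this hypothesis is needed by the proof (the paper's as well as yours) but is left implicit in the statement.
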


\section{Conclusion}
The peeling at an extreme horizon is analogous (modified merely by a finite conformal transformation) to the peeling at a {\bf conformally rescaled} asymptotically flat infinity with conformal factor $1/r$. The analogy of behaviour between a degenerate horizon and null infinity for an asymptotically flat spacetime should therefore be read between the physical field at the horizon (or its finite conformal rescaling $r\psi$) and a conformally rescaled field $r \psi$ at null infinity. Also, an important feature of the peeling at a degenerate horizon is that it is also valid for massive fields; in fact lower order perturbations of the d'Alembertian with coefficients bounded at all orders in the neighbourhood of internal infinity can also be accommodated.

\vspace{0.5in}

\noindent{\bf Acknowledgements.} JB gratefully acknowledges that part of this work was supported by the French \enquote{Investissements d’Avenir} program, project ISITE-BFC (contract ANR- 15-IDEX-0003). EG acknowledges that part of this work has been funded by
l'Agence Nationale de la Recherche, project StronG ANR-22-CE31-0015-01. JPN would like to thank Atul Sharma and Lionel Mason for stimulating discussions. He is also grateful to The Mathematical Institute, Oxford, for hospitality while this work was being developed and thanks the University of Brest and the LMBA for financial support during that time.

\end{document}